\newtheorem{theorem}{Theorem}[section]
\newtheorem{lemma}[theorem]{Lemma}
\newtheorem{proposition}[theorem]{Proposition}
\newtheorem{definition}[theorem]{Definition}
\newtheorem{remark}[theorem]{Remark}
\newcommand\cN{\mathcal{N}}
\newcommand\PN{\mathcal{P}(\cN)}
\newcommand\hA{{\hat A}}
\newcommand\hB{\hat B}
\newcommand\hE{\hat E}
\newcommand\EA{E^{\hA}}
\newcommand\hEA{\hE^{\hA}}
\newcommand\bbR{\mathbb{R}}
\newcommand\bbN{\mathbb{N}}
\newcommand\ra{\rightarrow}
\newcommand\mt{\mapsto}
\newcommand\lra{\longrightarrow}
\newcommand\lmt{\longmapsto}
\newcommand\bmeet{\bigwedge}
\newcommand\meet{\wedge}
\newcommand\bjoin{\bigvee}
\newcommand\join{\vee}
\newcommand\oA{o^{\hA}}
\newcommand\eR{\overline{\bbR}}
\newcommand\hP{{\hat P}}
\newcommand\hQ{\hat Q}
\newcommand\cH{\mathcal{H}}
\newcommand\Eo{E^o}
\renewcommand\sp{\operatorname{sp}}
\newcommand\sa{\operatorname{sa}}
\newcommand\mc[1]{\mathcal{#1}}
\newcommand\doto[2]{\delta^o({#1})_{#2}}
\newcommand\pde[1]{\ps{\delta(#1)}}
\newcommand\cA{\mathcal{A}}
\newcommand\eq[1]{(\ref{#1})}
\newcommand\PzN{\mathcal{P}_0(\cN)}
\newcommand\ol[1]{\overline{#1}}
\newcommand\eA{e^{\hat A}}
\newcommand\VlA{V_{\hA}}
\newcommand\Cl{\mc{C}l}
\newcommand\ps[1]{\underline{#1}}
\newcommand\Sig{\ps{\Sigma}}
\newcommand\on[1]{\operatorname{#1}}
\newcommand\Subcl[1]{\on{Sub}_{\on{cl}}#1} 	
\newcommand\ld{\lambda}
\newcommand\bbC{\mathbb{C}}
\newcommand\VN{\mc V(\cN)}
\newcommand\PV{\mc P(V)}
\newcommand\im{\on{im}}
\newcommand\Set{\mathbf{Set}}
\newcommand\SetC[1]{\Set^{{#1}^{\rm op}}}
\newcommand\SetVNop{\SetC{\mathcal{V(N)}}}
\newcommand\bdas[1]{\breve\delta(#1)}
\newcommand\Rlr{\ps{\bbR^{\leftrightarrow}}}
\newcommand\De{\Delta}
\newcommand\de{\delta}
\newcommand\Ain{A\varepsilon}
\newcommand\EbrA{E^{\breve{A}}}
\newcommand\obrA{o^{\breve{A}}}
\newcommand\tCA{\tilde C^A}
\newcommand\tqA{\tilde q^A}
\newcommand\cB{\mc B}
\newcommand\tr{\on{tr}}
\newcommand\trho{\tilde\rho}
\begin{document}
\title[Self-adjoint Operators as Functions II]{Self-adjoint Operators as Functions II:\\Quantum Probability}

\author{Andreas D\"oring}
\address{Andreas D\"oring\newline
\indent Clarendon Laboratory\newline
\indent Department of Physics\newline%
\indent University of Oxford\newline
\indent Parks Road\newline
\indent Oxford OX1 3PU, UK}
\email{doering@atm.ox.ac.uk}
\author{Barry Dewitt}
\address{Barry Dewitt\newline
\indent Department of Engineering and Public Policy\newline
\indent Carnegie Mellon University\newline
\indent 5000 Forbes Avenue\newline
\indent Pittsburgh, PA, 15213\newline
\indent USA}
\email{barrydewitt@cmu.edu}
\date{December 5, 2013}

\begin{abstract}
In ``Self-adjoint Operators as Functions I: Lattices, Galois Connections, and the Spectral Order'' \cite{DoeDew12a}, it was shown that self-adjoint operators affiliated with a von Neumann algebra $\cN$ can equivalently be described as certain real-valued functions on the projection lattice $\PN$ of the algebra, which we call $q$-observable functions. Here, we show that $q$-observable functions can be interpreted as generalised quantile functions for quantum observables interpreted as random variables. More generally, when $L$ is a complete meet-semilattice, we show that $L$-valued cumulative distribution functions (CDFs) and the corresponding $L$-quantile functions form a Galois connection. An ordinary CDF can be written as an $L$-CDF composed with a state. For classical probability, one picks $L=\cB(\Omega)$, the complete Boolean algebra of measurable subsets modulo null sets of a measurable space $\Omega$. For quantum probability, one uses $L=\PN$, the projection lattice of a nonabelian von Neumann algebra $\cN$. Moreover, using some constructions from the topos approach to quantum theory, we show that there is a joint sample space for \emph{all} quantum observables, despite no-go results such as the Kochen-Specker theorem. Specifically, the spectral presheaf $\Sig$ of a von Neumann algebra $\cN$, which is not a mere set, but a presheaf (i.e., a `varying set'), plays the role of the sample space. The relevant meet-semilattice in this case is $L=\Subcl\Sig$, the complete bi-Heyting algebra of clopen subobjects of $\Sig$. We show that using the spectral presheaf $\Sig$ and associated structures, quantum probability can be formulated in a way that is \emph{structurally} very similar to classical probability.
\end{abstract}

\maketitle

\textbf{Keywords:} Von Neumann algebra, random variable, quantum probability, projection-valued measure, cumulative distribution function, quantile function

\section{Introduction}
Let $\cN$ be a von Neumann algebra, let $\PN$ be its projection lattice, and let $\eR$ denote the extended reals. A $q$-observable function is a function
\begin{equation}
			o:\PN\lra\eR
\end{equation}
that preserves joins and is characterised by two simple properties: (a) $o(\hP)>-\infty$ for all non-zero projections $\hP\in\PN$, and (b) there is a family $(\hP_i)_{i\in I}\subseteq\PN$ with $\bjoin_{i\in I}\hP_i=\hat 1$ such that $o(\hP_i)\lneq\infty$ for all $i\in I$. Nothing in this description alludes to linear self-adjoint operators, but each $q$-observable function determines a unique self-adjoint operator affiliated with $\cN$ and vice versa, as was shown in ``Self-adjoint Operators as Functions I: Lattices, Galois Connections, and the Spectral Order'' \cite{DoeDew12a} (in the following called Part I). In section \ref{Sec_ResultsPartI}, we recall some results of Part I that will play a role in this article.

In section \ref{Sec_qObsFctsAndProbab}, we provide an interpretation of $q$-observable functions in terms of probability theory. They turn out to be quantile functions of projection-valued measures arising from self-adjoint operators. Some general theory of quantile functions with values in a complete meet-semilattice $L$ is developed and applied to quantum probability theory, extending the usual description of quantum probability based on structures in nonabelian von Neumann algebras.

An even closer relationship and structural similarity between classical and quantum probability is established in section \ref{Sec_QProbabAndSigma} by showing that there is a joint sample space for all quantum observables in the form of the spectral presheaf of a von Neumann algebra, notwithstanding no-go results such as the Kochen-Specker theorem. The latter only applies to sample spaces which are sets, while we consider `generalised sets' in the form of presheaves. We will show how random variables, probability measures, cumulative distribution functions and quantile functions can be formulated in a presheaf-based perspective, with the spectral presheaf taking the role of the sample space. Section \ref{Sec_Summary} concludes and gives a short outlook on future work.

\section{Some results of Part I}			\label{Sec_ResultsPartI}
In order to make the present article largely self-contained, we briefly introduce some definitions and summarise some results of Part I that are relevant for the following sections. For more details and proofs, please see Part I \cite{DoeDew12a}.

\textbf{Self-adjoint operators and $q$-observable functions.} Let $\cN$ be a von Neumann algebra, and let $\hA$ be a self-adjoint operator affiliated with $\cN$. If $\hA$ is bounded, then it lies in $\cN_{\sa}$, the self-adjoint operators in $\cN$. The set of self-adjoint operators affiliated with $\cN$ is denoted $SA(\cN)$. Clearly, $\cN_{\sa}\subset SA(\cN)$.

A \emph{$q$-observable function} on $\PN$ is a function
\begin{equation}
			o:\PN \lra \eR
\end{equation}
from the projections in $\cN$ to the extended reals $\eR=\bbR\cup\{-\infty,\infty\}$ that preserves joins\footnote{That is, $o(\bjoin_{i\in I}\hP_i)=\sup_{i\in I}o(\hP_i)$ for all families $(\hP_i)_{i\in I}\subseteq\PN$.} such that
\begin{itemize}
	\item [(a)] $o(\hP)>-\infty$ for all non-zero projections $\hP\in\PN$,
	\item [(b)] there is a family $(\hP_i)_{i\in I}\subseteq\PN$ with $\bjoin_{i\in I}\hP_i=\hat 1$ such that $o(\hP_i)\lneq\infty$ for all $i\in I$.
\end{itemize}
The set of all $q$-observable functions on $\PN$ is denoted $QO(\PN,\eR)$. In \cite{DoeDew12a}, it was shown that there is a bijection
\begin{equation}
			SA(\cN) \lra QO(\PN,\eR)
\end{equation}
between self-adjoint operators affiliated with $\cN$ and $q$-observable functions. This works in two steps: a self-adjoint operator $\hA\in SA(\cN)$ has a right-continuous spectral family $\EA:\bbR\ra\PN$ that can canonically be extended to the extended reals $\eR$ by setting $\hEA_{-\infty}:=\hat 0$ and $\hEA_{\infty}=\hat 1$. Then
\begin{equation}
			\EA:\eR \lra \PN
\end{equation}
is a monotone function between complete meet-semilattices which preserves all meets, so by the adjoint functor theorem for posets (see e.g. Example 9.33 in \cite{Awo10} or Thm. 2.5 in \cite{DoeDew12a}), $\EA$ has a left adjoint
\begin{align}
			\oA:\PN &\lra \eR\\			\nonumber
			\hP &\lmt \inf\{r\in\eR \mid \hP\leq\hEA_r\}.
\end{align}
It is straightforward to show that $\oA$ is a $q$-observable function.

Conversely, a $q$-observable function $o$ has a right adjoint $\Eo:\eR\ra\PN$, which can be shown to be a right-continuous extended spectral family $\EA:\eR\ra\PN$. For details, see Prop. 3.7 and Thm. 3.8 in \cite{DoeDew12a}.

If $\oA$ is the $q$-observable function corresponding to some self-adjoint operator $\hA$ and $\PzN$ denotes the non-zero projections in $\cN$, then
\begin{equation}
			\oA(\PzN)=\sp\hA,
\end{equation}
as shown in Lemma 3.12 of \cite{DoeDew12a}. Clearly, the operator $\hA$ is bounded if and only if $\oA(\PzN)$ is compact. Thus, the bounded self-adjoint operators in $\cN_{\sa}$ correspond bijectively to the $q$-observable functions with compact image on non-zero projections (see Prop. 3.15 in \cite{DoeDew12a}).

\textbf{Spectral order.} The set $SA(\cN)$ can be equipped with the \emph{spectral order} \cite{Ols71,deG04}: let $\hA,\hB\in SA(\cN)$, then
\begin{equation}
			\hA\leq_s\hB \quad:\Longleftrightarrow\quad (\forall r\in\eR:\hEA_r\geq\hat E^{\hB}_r).
\end{equation}
One can show that $(SA(\cN),\leq_s)$ is a conditionally complete lattice. Moreover, the set $QO(\PN,\eR)$ of $q$-observable functions can be equipped with the pointwise order, which also makes it a conditionally complete lattice.

As shown in Prop. 3.9 in Part I, there is an order-isomorphism
\begin{align}
			\phi:(SA(\cN),\leq_s) &\lra (QO(\PN,\eR),\leq)\\			\nonumber
			\hA &\lmt \oA,
\end{align}
that is, the spectral order on self-adjoint operators corresponds to the pointwise order on $q$-observable functions.

\textbf{Rescalings.} If $f:\eR\ra\eR$ is a join-preserving function, then, for all $\hA\in SA(\cN)$, it holds that
\begin{equation}
			o^{f(\hA)}=f(\oA),
\end{equation}
which provides a limited form of functional calculus for $q$-observable functions (see Thm. 4.7 in \cite{DoeDew12a}). Physically, such a function $f$ can be interpreted as an (order-preserving) rescaling of the values of outcomes. In simple cases, this corresponds to a change of units, e.g. from $^\circ C$ to $^\circ F$.

\section{$q$-observable functions and probability theory}			\label{Sec_qObsFctsAndProbab}
Subsection \ref{Subsec_ClassProbab} provides a very brief recap of some mathematical structures in classical probability theory, including cumulative distribution functions (CDFs, given by \emph{c\`adl\`ags}) and quantile functions. As far as we are aware, the observation that a CDF and the corresponding quantile function form a Galois connection is new.

In subsection \ref{Subsec_LValuedMeasures}, we consider more general measures with values in a complete meet-semilattice $L$ and define the new notions of \emph{$L$-CDFs} and \emph{$L$-quantile functions}, which again are related to each other by an adjunction (i.e., a Galois connection). The generalisation to $L$-valued measures and associated notions is useful in quantum theory, where one deals with projection-valued measures, and also serves as a preparation for section \ref{Sec_QProbabAndSigma}, where we will consider measures taking values in a certain complete bi-Heyting algebra of clopen subobjects of a presheaf.

The step to quantum theory and noncommutative probability is taken in subsection \ref{Subsec_QProbab}. We present some standard material on the use of von Neumann algebras in classical and quantum probability and show that $q$-observable functions are $\PN$-quantile functions for $\PN$-valued measures, adding an aspect to the established picture. Spectral families play the role of $\PN$-CDFs. We emphasise the role of the Gelfand spectrum as a sample space in the commutative case, because the spectral presheaf that will be introduced in section \ref{Sec_QProbabAndSigma} is a generalisation of the Gelfand spectrum to nonabelian von Neumann algebras and will serve as a joint sample space for all (noncommuting) quantum random variables. 

\subsection{Basic structures in classical probability}			\label{Subsec_ClassProbab}
We first describe some basic structures in classical probability theory. There are many good references, though quantile functions are not treated in some standard texts. For these, see e.g. \cite{Was04,Gil00}.

Let $(\Omega,\cB(\Omega),\mu)$ be a measure space, where $\Omega$ is a non-empty set, $\cB(\Omega)$ is a $\sigma$-algebra of $\mu$-measurable subsets of $\Omega$, and $\mu$ is a probability measure. The elements of $\cB(\Omega)$ are called \emph{events}, and $\Omega$ is called the \emph{sample space} of the system. The probability measure $\mu$ is a map $\mu:\cB(\Omega)\ra [0,1]$ such that $\mu(\Omega)=1$, and for all countable families $(S_i)_{i\in\bbN}\subset B(\Omega)$ of pairwise disjoint events, it holds that
\begin{equation}
			\mu(\bigcup_{i\in\bbN}S_i)=\sum_{i\in\bbN}\mu(S_i).
\end{equation}

A \emph{random variable} is a measurable function $A:\Omega\ra\im A\subseteq\bbR$. Conceptually, the inverse image function $A^{-1}$ is more important. Since $A$ is measurable, we have
\begin{equation}
			A^{-1}(\De)\in\cB(\Omega)
\end{equation}
for every Borel subset $\De$ of $\bbR$. In this way, $A^{-1}$ maps measurable subsets of numerical outcomes to measurable subsets of the sample space.

The \emph{cumulative distribution function (CDF) $C^A$} of a random variable $A$ with respect to a given probability measure $\mu$ is defined as
\begin{align}			\label{Def_CA}
			C^A:\bbR &\lra [0,1]\\			\nonumber
			r &\lmt \mu(A^{-1}(-\infty,r]).
\end{align}
We can canonically extend $C^A$ to a function from the extended reals $\eR$ to $[0,1]$ by setting $C^A(-\infty):=0$ and $C^A(\infty):=1$. Clearly, an extended CDF is uniquely determined by its restriction to $\bbR$, so we will not distinguish between CDFs (defined on $\bbR$) and extended CDFs (defined on $\eR$) notationally. In the following, CDF will always mean extended CDF unless mentioned otherwise.

The measure $\mu$ assigns probability weights to all events $S\in\cB(\Omega)$. Consider the event representing the random variable $A$ having values not exceeding $r$, which is $A^{-1}((-\infty,r])$. Then $C^A(r)$ is the probability weight of this event. So, roughly speaking, the value $C^A(r)$ of the CDF at $r$ expresses the probability that $A$ has a value not exceeding $r$.

It is obvious that a CDF $C^A:\eR\ra [0,1]$ is order-preserving and right-continuous. Moreover, $\lim_{r\ra -\infty}C^A(r)=0$ and $\lim_{r\ra\infty}C^A(r)=1$. This means that CDFs are examples of \emph{c\`adl\`ag functions} (also simply called \emph{c\`adl\`ags}), which stands for \emph{continue \`a droite, limite \`a gauche}.\footnote{The set of c\`adl\`ags is typically denoted as $D$ and can be equipped with the Skorohod topology, which we will not introduce here. For details, see e.g. \cite{Bil99}.}

We observe that each CDF $C^A:\eR\ra [0,1]$ on the extended reals is a map between complete meet-semilattices that preserves all meets and hence has a left adjoint
\begin{align}
			q^A:[0,1] &\lra \eR\\			\nonumber
			p &\lmt \inf\{r\in\eR \mid C^A(r)\geq p\}.
\end{align}
This function is well-known: it is the \emph{quantile function of $A$} with respect to $\mu$. 

\begin{remark}
The existence of the Galois connection between a CDF $C^A$ and the corresponding quantile function $q^A$ seems to have been overlooked in the literature, though it is implicit in the usual treatments. The adjunction is available because $C^A$ can canonically be extended to $\eR$, which makes $C^A:\eR\ra [0,1]$ into a meet-preserving map between complete meet-semilattices and allows us to employ the adjoint functor theorem for posets (see e.g. \cite{Awo10,DoeDew12a}). The left adjoint $q^A:[0,1]\ra\eR$ preserves all joins, i.e., suprema. (Recall that each complete meet-semilattice is also a complete join-semilattice.) Usually, $q^A$ is defined as a function from $(0,1)$ to $\bbR$ that is left-continuous. Leaving out the probabilities $0$ and $1$ is artificial.
\end{remark}

The property of left-continuity of a quantile function becomes a simple consequence of the existence of the Galois connection: being a left adjoint, the quantile function $q^A$ preserves joins, so in particular
\begin{equation}
				\forall p\in [0,1]: \sup_{s<p} q^A(s) = q^A(\sup_{s<p} s) = q^A(p).
\end{equation}

Note that since $C^A(-\infty)=0$, we have $q^A(0)=-\infty$. For any $p>0$, it holds that $q^A(p)\gneq-\infty$, and $q^A(1)=\inf\{r\in\eR \mid C^A(r)\geq 1\}=\sup(\im A)$.

The function $q^A$ assigns to each probability $p\in (0,1]$ the smallest value $r$ (which is easily seen to lie in the image of the random variable $A$) such that the probability of $A$ having a value not exceeding $r$ is (at least) $p$. Intuitively speaking, the quantile function $q^A$ is the `inverse' of the CDF $C^A$.

\subsection{$L$-valued measures, $L$-CDFs and $L$-quantile functions}			\label{Subsec_LValuedMeasures}
Let $(\Omega,\cB(\Omega))$ be a measurable space, and let $A:\Omega\ra\eR$ be a random variable as before (where $\im A\subseteq\bbR\subset\eR$). We want to define a function $\tCA$ similar to the cumulative distribution function $C^A:\eR\ra [0,1]$, but without any reference to a probability measure $\mu$. To this end, we think of the inverse image $A^{-1}$ of the random variable as a `$\cB(\Omega)$-valued measure'. 

We will see that it is advantageous to assume that $\cB(\Omega)$ is a complete Boolean algebra, not just a $\sigma$-complete one. A standard result shows that if $\cB_{\sigma}(\Omega)$ is a $\sigma$-complete Boolean algebra, and if one is given a probability measure $\mu:\cB(\Omega)\ra [0,1]$, then
\begin{equation}
			\cB(\Omega) := \cB_{\sigma}(\Omega)/\mc N(\Omega)
\end{equation}
is a complete Boolean algebra, where $\mc N(\Omega)$ denotes the subsets of measure $0$ with respect to $\mu$. 

In fact, for quantum theory we will consider more general lattices $L$ than just complete Boolean algebras of the form $\cB(\Omega)$. For the generalised notions of CDFs and quantile functions, we merely need that $L$ is a complete meet-semilattice (and hence also is a complete join-semilattice). $L$ need not be distributive.

We equip the extended reals $\eR$ with the topology generated by the set ${\eR}$ itself and by the intervals $(r,s)$, where $r,s\in\eR$, $r\leq s$ and $(r,r)=\emptyset$ by convention. The relative topology of $\bbR\subset\eR$ is the standard topology on $\bbR$. By $\cB(\eR)$, we denote the $\sigma$-complete Boolean algebra of Borel subsets of $\eR$.

\begin{definition}
Let $\cB(\Omega)$ be a complete Boolean algebra of (equivalence classes of) subsets of a set $\Omega$, and let $A^{-1}:\cB(\eR)\ra\cB(\Omega)$ be the inverse image of a random variable $A:\Omega\ra\im A\subset\eR$ such that $A^{-1}$ preserves all existing meets. The \emph{$\cB(\Omega)$-cumulative distribution function ($\cB(\Omega)$-CDF) $\tCA$ of $A^{-1}$} is the function
\begin{align}			\label{Def_tCAForB(Om)}
			\tCA: \eR &\lra \cB(\Omega)\\			\nonumber
			r &\lmt A^{-1}([-\infty,r]).
\end{align}
\end{definition}

More generally, we can replace $\cB(\Omega)$ in the definition above by a complete meet-semilattice $L$, and assume that $A^{-1}:\cB(\eR)\ra L$ preserves all existing meets. This implies in particular that
\begin{equation}
	A^{-1}(\eR)=A^{-1}(\bmeet\emptyset)=\bmeet\emptyset=\top_L,
\end{equation}
the top element of $L$. Moreover, 
\begin{equation}
			A^{-1}(\emptyset)=A^{-1}(\bmeet_{S\in\cB(\eR)} S)=\bmeet_{S\in\cB(\eR)} A^{-1}(S)\supseteq\bmeet_{T\in L} T=\bot_L,
\end{equation}
the bottom element of $L$. We assume that $A^{-1}$ is such that $A^{-1}(\emptyset)=\bot_L$. This generalises the classical case: if $L=\cB(\Omega)$, then $A^{-1}(\emptyset)=\emptyset=\bot_{\cB(\Omega)}$.

$A^{-1}$ is interpreted as an \emph{$L$-valued measure}, and the \emph{$L$-CDF of $A^{-1}$} is the function
\begin{align}			\label{Def_L-CDF}
			\tCA: \eR &\lra L\\			\nonumber
			r &\lmt A^{-1}([-\infty,r]).
\end{align}
Note that here, $A^{-1}$ is a symbolic notation, since there may be no function $A$ such that $A^{-1}$ is its inverse image function. In the definition of $\tCA$, we can include $-\infty$ in the interval $[-\infty,r]$, since $A^{-1}$ is defined on $\cB(\eR)$, the Borel subsets of $\eR$ (and not just on $\cB(\bbR)$, the Borel subsets of $\bbR$).

Let $(r_i)_{i\in I}\subseteq\eR$ be an arbitrary net of extended real numbers. It is easy to see that
\begin{equation}			\label{Eq_InfsToIntersects}
			[-\infty,\inf_{i\in I} r_i] = \bigcap_{i\in I} [-\infty,r_i] \in \cB(\eR).
\end{equation}
In particular, the meet (intersection) on the right-hand side exists in the $\sigma$-complete Boolean algebra $\cB(\eR)$ even if $I$ is an uncountable index set or net. The assumption that $A^{-1}:\cB(\eR)\ra\cB(\Omega)$ preserves all existing meets means that
\begin{equation}			\label{Eq_A-1(Meet)IsMeet(A-1)}
			A^{-1}(\bigcap_{i\in I} S_i) = \bmeet_{i\in I} A^{-1}(S_i)
\end{equation}
for all families $(S_i)_{i\in I}\subseteq\cB(\eR)$ such that $\bigcap_{i\in I} S_i\in\cB(\eR)$. In particular,
\begin{equation}			\label{Eq_AInversePresMeets}
			A^{-1}([-\infty,\inf_{i\in I} r_i]) \stackrel{\eq{Eq_InfsToIntersects}}{=} A^{-1}(\bigcap_{i\in I}[-\infty,r_i]) \stackrel{\eq{Eq_A-1(Meet)IsMeet(A-1)}}{=} \bmeet_{i\in I} A^{-1}([-\infty,r_i]).
\end{equation}

\begin{lemma}
Let $L$ be a complete meet-semilattice, and let $A^{-1}:\cB(\eR)\ra L$ be an $L$-valued measure that preserves all existing meets. Then the $L$-CDF $\tCA:\eR\ra L$ of $A^{-1}$ preserves all meets.
\end{lemma}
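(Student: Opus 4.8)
The plan is to reduce the claim directly to the chain of equalities already recorded in \eq{Eq_AInversePresMeets}. To say that $\tCA:\eR\ra L$ preserves all meets is to say that for every family $(r_i)_{i\in I}\subseteq\eR$ one has $\tCA(\bmeet_{i\in I} r_i)=\bmeet_{i\in I}\tCA(r_i)$, where the meet on the left is taken in the complete lattice $\eR$. The first thing I would note is that in $\eR$ the meet of a family is simply its infimum, so the left-hand side is just $\tCA(\inf_{i\in I} r_i)$.

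For a nonempty index set $I$, I would then unfold the definition of $\tCA$ and read off the result from \eq{Eq_AInversePresMeets} essentially verbatim:
\begin{equation}
	\tCA\Big(\inf_{i\in I} r_i\Big)=A^{-1}\big([-\infty,\inf_{i\in I} r_i]\big)=\bmeet_{i\in I} A^{-1}([-\infty,r_i])=\bmeet_{i\in I}\tCA(r_i).
\end{equation}
The two middle equalities are exactly the content of \eq{Eq_InfsToIntersects} (rewriting the sublevel interval of the infimum as an intersection of sublevel intervals) followed by the hypothesis that $A^{-1}$ preserves all existing meets. The only point worth emphasising is that the argument is valid for \emph{arbitrary}, possibly uncountable, index sets $I$: although $\cB(\eR)$ is merely $\sigma$-complete, the particular intersection $\bigcap_{i\in I}[-\infty,r_i]$ nonetheless lies in $\cB(\eR)$, since by \eq{Eq_InfsToIntersects} it equals $[-\infty,\inf_{i\in I} r_i]$, so the meet-preservation hypothesis genuinely applies to it.

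It then remains to treat the empty meet, which is the one case not covered by a family of sublevel intervals. The empty meet in $\eR$ is its top element $\infty$, and one has $\tCA(\infty)=A^{-1}([-\infty,\infty])=A^{-1}(\eR)=\top_L$, as was observed when $A^{-1}$ was introduced; since $\top_L$ is also the empty meet in $L$, the identity $\tCA(\bmeet\emptyset)=\bmeet\emptyset$ holds as well. I do not expect any genuine obstacle here: the substantive work is the interval identity \eq{Eq_InfsToIntersects}, which has already been established, so the lemma comes down to unwinding the definition of $\tCA$, invoking meet-preservation of $A^{-1}$, and separately checking the empty case.
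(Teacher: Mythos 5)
Your proposal is correct and follows essentially the same route as the paper: the paper's proof is exactly the three-line chain $\tCA(\inf_i r_i)=A^{-1}([-\infty,\inf_i r_i])=\bmeet_i A^{-1}([-\infty,r_i])=\bmeet_i\tCA(r_i)$, invoking \eq{Eq_AInversePresMeets}. Your separate treatment of the empty meet is a harmless extra precaution that the paper leaves implicit (the chain already covers $I=\emptyset$ via $A^{-1}(\eR)=\top_L$).
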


\begin{proof}
We have, for all families $(r_i)_{i\in I}\subseteq\eR$,
\begin{align}
			\tCA(\inf_{i\in I} r_i) &= A^{-1}([-\infty,\inf_{i\in I} r_i])\\
			&\stackrel{\eq{Eq_AInversePresMeets}}{=} \bmeet_{i\in I} A^{-1}([-\infty,r_i])\\
			&= \bmeet_{i\in I} \tCA(r_i).
\end{align}
\end{proof}

Here we make use of the fact that $\eR$ is a complete meet-semilattice, while $\cB(\eR)$ is just $\sigma$-complete. Since $\tCA$ preserves meets, it has a left adjoint:
\begin{definition}
Let $\tCA:\eR\ra L$ be the $L$-CDF of an $L$-valued measure $A^{-1}:\cB(\eR)\ra L$. The \emph{$L$-quantile function $\tqA$ of $A^{-1}$} is the left adjoint of $\tCA$,
\begin{align}
			\tqA: L &\lra \eR\\			\nonumber
			T &\lmt \inf\{r\in\eR \mid T\leq\tCA(r)\}.
\end{align}
\end{definition}
The map $\tqA$ between the complete join-semilattices $L$ and $\eR$ preserves all joins by the adjoint functor theorem for posets.

\noindent\textbf{Relation to standard CDFs and quantile functions.} Let $A:\Omega\ra\eR$ be a random variable, and let $A^{-1}:\cB(\eR)\ra\cB(\Omega)$ be the corresponding $\cB(\Omega)$-valued measure. If $\tCA$ is the $\cB(\Omega)$-CDF of $A^{-1}$, and $\mu:\cB(\Omega)\ra [0,1]$ is a probability measure, then the usual CDF of $A$ with respect to $\mu$ is
\begin{equation}
			C^A=\mu\circ\tCA: \eR \lra \cB(\Omega) \lra [0,1],
\end{equation}
as can be seen from the definition of $\tCA$, \eq{Def_tCAForB(Om)}, and the definition of $C^A$, \eq{Def_CA}. (Note that now $A^{-1}$ is defined on $\cB(\eR)$, not just on $\cB(\bbR)$ as in \eq{Def_CA}, so we can write $A^{-1}([-\infty,r])$ instead of $A^{-1}((-\infty,r])$.)

This generalises directly to the situation of $L$-valued measures $A^{-1}:\cB(\eR)\ra L$ and probability measures $\mu: L \ra [0,1]$; for these we have the decomposition
\begin{equation}			\label{Eq_CDFDecomposed}
			C^A=\mu\circ\tCA: \eR \lra L \lra [0,1]
\end{equation}
of the usual CDF $C^A$ into the $L$-CDF $\tCA$, followed by the probability measure $\mu$. 

We had assumed that $A^{-1}$ is meet-preserving, which implies that $\tCA:\eR\ra L$ is meet-preserving as well. Hence, a CDF $C^A$ is a meet-preserving map if and only if $\mu|_{\im\tCA}:L\ra [0,1]$ preserves all meets. This can be seen as a regularity condition on the measure $\mu$.

If the probability measure $\mu$ preserves meets, then it has a left adjoint
\begin{align}
			\kappa: [0,1] &\lra L\\			\nonumber
			s &\lmt \bmeet\{T\in L \mid s\leq\mu(T)\}.
\end{align}
Moreover, in this case $C^A=\mu\circ\tCA$ preserves meets, so it has a left adjoint $q^A:[0,1]\ra\eR$, the quantile function of the random variable $A$ with respect to the measure $\mu$. Clearly, $q^A$ can be decomposed into two left adjoints as
\begin{equation}			\label{Eq_QuantFctDecomposed}
			q^A=\tqA\circ\kappa.
\end{equation}

\subsection{Quantum probability and projection-valued measures}			\label{Subsec_QProbab}
The quantum analogue of a random variable $A$ is usually taken to be a bounded self-adjoint operator $\hA:\cH\ra\cH$ on a Hilbert space $\cH$. We briefly recall, without proofs, how this connects with the classical case, where a random variable $A:\Omega\ra\bbR$ is a measurable function from a sample space $\Omega$ to the reals. Good references are \cite{KR83,Rud73}. The article \cite{RedSum07} gives a nice general introduction to quantum probability.

\noindent\textbf{Classical probability and abelian von Neumann algebras.} Let $\Omega$ be a measure space that is either discrete (that is, a finite or countable set, equipped with the counting measure), continuous (that is, isomorphic to $[0,1]$, equipped with the Lebesgue measure), or of mixed type, with a discrete and a continuous part. Let $L^\infty(\Omega)$ be the Banach space of equivalence classes (modulo measure-zero subsets) of complex, bounded, measurable functions on $\Omega$, with the essential supremum as norm. This is an abelian Banach algebra under pointwise multiplication, and in fact is an abelian von Neumann algebra. Moreover, every abelian von Neumann algebra is isomorphic to an algebra of the form $L^\infty(\Omega)$ for $\Omega$ a measure space of one of the types listed above. 

Let $\Sigma$ denote the Gelfand spectrum of $L^\infty(\Omega)$. By Gelfand duality, there is an isometric $*$-isomorphism
\begin{equation}
			L^\infty(\Omega)\simeq C(\Sigma).
\end{equation}
Note that the Gelfand topology on $\Sigma$ is such that the equivalence classes of measurable functions on $\Omega$ correspond to continuous functions on $\Sigma$. The spectrum $\Sigma$ is an extremely disconnected compact Hausdorff space in the Gelfand topology, that is, the closure of each open set is open, and the interior of each closed set is closed. The clopen, i.e., closed and open subsets of $\Sigma$ form a complete (not just $\sigma$-complete) Boolean algebra $\Cl(\Sigma)$.

Hence, we can view the real-valued elements in $C(\Sigma)$, which are the Gelfand transforms of the real-valued elements in $L^\infty(\Omega)$, as random variables. The Gelfand spectrum $\Sigma$ is their joint sample space, with the clopen subsets $\Cl(\Sigma)$ as measurable subsets. 

Since $C(\Sigma)$ is an abelian $C^*$-algebra, it can be represented faithfully by a $C^*$-algebra $V$ of bounded operators on a Hilbert space $\cH$, and as $C(\Sigma)$ is weakly closed, i.e., a von Neumann algebra, so is $V$. Let $\ol A$ be a real-valued function in $C(\Sigma)$, and let $\hA$ be the corresponding self-adjoint operator in $V$. Then
\begin{align}
			\ol A:\Sigma &\lra \sp\hA\\			\nonumber
			\ld &\lmt \ol A(\ld):=\ld(\hA).
\end{align}
The random variable $\ol A$ has $\Sigma$ as its sample space and $\sp\hA$ as its set of outcomes. Of course, one can equivalently regard the bounded self-adjoint operator $\hA$ as the representative of the random variable.

The smallest abelian von Neumann algebra that contains a given bounded self-adjoint operator $\hA$ is $\VlA:=\{\hA,\hat 1\}''$, the double commutant of the set $\{\hA,\hat 1\}$. In this case, the random variable
\begin{align}
			\ol A:\Sigma_{\VlA} &\lra \sp\hA\\			\nonumber
			\ld &\lmt \ld(\hA).
\end{align}
is a bijection. Intuitively, the sample space $\Sigma_{\VlA}$ is the smallest one that can accommodate the random variable $\ol A$. If $V\supset\VlA$ is any bigger abelian von Neumann algebra that contains $\hA$, the random variable $\ol A:\Sigma_V\ra\sp\hA$ is a surjective, but not an injective map.

If we consider a random variable in the form of a bounded self-adjoint operator $\hA$, lying in the abelian von Neumann algebra $\VlA$, then the spectral theorem shows that the `inverse image' of this random variable is the well-known projection-valued spectral measure
\begin{equation}
			\eA:\cB(\sp\hA) \lra \mc P(\VlA),
\end{equation}
sending Borel sets of numerical outcomes (i.e., values in the spectrum of $\hA$) to projections in $\VlA$. In the notation of subsection \ref{Subsec_LValuedMeasures}, $\eA$ is a $\mc P(\VlA)$-valued measure. $L=\mc P(\VlA)$ is a complete Boolean algebra, so in particular, it is a complete meet-semilattice. If $V$ is some abelian von Neumann algebra such that $V\supset\VlA$, then there is a spectral measure $\eA:\cB(\sp\hA)\ra\PV$, given by
\begin{equation}
			\eA:\cB(\sp\hA) \ra \mc P(\VlA) \hookrightarrow \PV,
\end{equation}
where $\mc P(\VlA) \hookrightarrow \PV$ is the inclusion of the complete Boolean algebra $\mc P(\VlA)$ into the complete Boolean algebra $\PV$. This inclusion preserves all meets and joins (see e.g. Lemma 1 in \cite{DoeDew12a}).

For each abelian von Neumann algebra $V$, there is an isomorphism of complete Boolean algebras
\begin{align}			\label{Def_alphaV}
			\alpha_V:\PV &\lra \Cl(\Sigma_V)\\			\nonumber
			\hP &\lmt \{\ld\in\Sigma_V \mid \ld(\hP)=1\}.
\end{align}
Using this, the spectral measure $\eA:\cB(\sp\hA)\ra\PV$ can also be regarded as a map
\begin{equation}
			\eA:\cB(\sp\hA) \ra \PV \ra \Cl(\Sigma_V),
\end{equation}
taking values in the clopen (that is, measurable) subsets of the sample space $\Sigma_V$, the Gelfand spectrum of $V$. We extend the domain of $\eA$ canonically to $\cB(\eR)$ by
\begin{equation}			\label{Def_SpecMeasOnExtendedReals}
			\forall \De\in\cB(\eR): \eA(\De):=\eA(\De\cap\sp\hA).
\end{equation}
Written in this form, $\eA:\cB(\eR)\ra\Cl(\Sigma_V)$ is indeed the inverse image of the random variable $\ol A:\Sigma_V\ra\sp\hA\subset\eR$. In the notation of subsection \ref{Subsec_LValuedMeasures}, $\eA$ is a $\Cl(\Sigma_V)$-valued measure. $L=\Cl(\Sigma_V)$ is a complete Boolean algebra, so in particular it is a complete meet-semilattice, with meets given by
\begin{equation}
			\forall (S_i)_{i\in I}\subseteq\Cl(\Sigma_V): \bmeet_{i\in I} S_i = \on{int} \bigcap_{i\in I} S_i.
\end{equation}
This is clearly an open set, and it is also closed since it is the interior of a closed set and $\Sigma_V$ is extremely disconnected.

\noindent\textbf{Noncommutative probability and nonabelian von Neumann algebras.} So far, we have simply rewritten some aspects of classical probability (sample space, random variables and their inverse images) leading to a formulation with self-adjoint operators and abelian von Neumann algebras. The step to noncommutative quantum probability is taken by considering nonabelian von Neumann algebras. Self-adjoint operators still represent random variables, and the spectral theorem holds such that we can speak of the inverse images of random variables, in the form of spectral measures $\eA$. In fact, the spectral measure of a self-adjoint operator $\hA$ in the noncommutative case is the composite
\begin{equation}
			\eA:\cB(\eR) \ra \mc P(\VlA) \hookrightarrow \PN,
\end{equation}
where $\mc P(\VlA)\hookrightarrow\PN$ is the inclusion of the complete Boolean algebra $\mc P(\VlA)$ into the complete orthomodular lattice $\PN$. This inclusion preserves all meets and joins. $\eA$ is a $\PN$-valued measure in the noncommutative case. $L=\PN$ is a complete orthomodular lattice, so in particular it is a complete meet-semilattice.

Yet, what is not immediate in the nonabelian case is a suitable generalisation of the sample space in the form of the Gelfand spectrum $\Sigma_V$ of an abelian von Neumann algebra $V$. In section \ref{Sec_QProbabAndSigma}, we will show that such a joint sample space for \emph{all} quantum random variables exists. It is given by the so-called \emph{spectral presheaf $\Sig$} of a nonabelian von Neumann algebra $\cN$.

For now, we focus on the usual projection-valued measures of the form $\eA:\cB(\eR)\ra\PN$. Here, the Hilbert space $\cH$ on which the algebra $\cN$ is represented plays the role analogous to the sample space of the system, with the closed subspaces corresponding to the projections in $\cN$ playing the role of measurable subsets. Different from the abelian case, the subspaces (respectively projections) do not form a Boolean algebra, but a non-distributive orthomodular lattice.

The right-continuous spectral family $\EA$ of the self-adjoint operator $\hA\in\cN$ is given by
\begin{align}
			\EA:\bbR &\lra \PN\\			\nonumber
			r &\lmt \eA((-\infty,r]).
\end{align}
As in Part I \cite{DoeDew12a}, we can extend the domain of $\EA$ canonically to the extended reals $\eR$ by defining $\hEA_{-\infty}:=\hat 0$ and $\hEA_{\infty}:=\hat 1$. Then
\begin{align}
			\EA:\eR &\lra \PN\\			\nonumber
			r &\lmt \eA([-\infty,r]).
\end{align}
It is now obvious that the extended spectral family $\EA$ is the $\PN$-cumulative distribution function ($\PN$-CDF) of the projection-valued measure $\eA$, in the sense of \eq{Def_L-CDF}.

This leads to the probabilistic interpretation of the $q$-observable function
\begin{equation}
			\oA:\PN \lra \eR
\end{equation}
of $\hA$: as the left adjoint of $\EA:\eR \ra \PN$, it is the $\PN$-quantile function of the quantum random variable described by the $\PN$-valued measure $\eA$. The fact that $\PN$ is not a Boolean algebra as in the classical case is irrelevant, since the definition of the left adjoint of $\EA$ does not depend on distributivity. We just need that $L=\PN$ is a complete meet-semilattice.

The observation that quantum random variables have quantile functions, in the form of left adjoints of their spectral families, is new as far as we are aware. Just like the spectral measure $\eA$ and the spectral family $\EA$, the $q$-observable function $\oA$ characterises the quantum random variable $\hA$ completely (cf. Thm. 3 in Part I \cite{DoeDew12a}).

\noindent\textbf{States as probability measures on projections.} The projection-valued measure $\eA$ of the quantum random variable $\hA$ is not the analogue of the probability measure $\mu:\cB(\Omega)\ra [0,1]$ in the classical case. Rather, the quantum version of a probability measure is provided by a \emph{state} of the quantum system. As the generalised version of Gleason's theorem \cite{Mae90} shows, each quantum state $\rho:\cN\ra\bbC$ is equivalently given by a map
\begin{equation}
			\mu_\rho:\PN\ra [0,1]
\end{equation}
such that
\begin{itemize}
	\item [(a)] $\mu_\rho(\hat 1)=1$,
	\item [(b)] for $\hP,\hQ\in\PN$ such that $\hP\hQ=\hQ\hP=\hat 0$, it holds that $\mu_\rho(\hP+\hQ)=\mu_\rho(\hP)+\mu_\rho(\hQ)$.
\end{itemize}
Such a map $\mu_\rho$ is called a \emph{finitely additive probability measure on projections}. If $\mu_\rho$ is completely additive, then the corresponding state $\rho$ is normal. The state $\rho$ and the measure $\mu_\rho$ are related in the following way:
\begin{equation}
			\mu_\rho=\rho|_{\PN}.
\end{equation}

In the following, we will only consider normal quantum states. As is well-known, every normal state $\rho$ is of the form
\begin{equation}
			\forall\hA\in\PN: \rho(\hA)=\tr(\trho\hA)
\end{equation}
for some positive trace-class operator $\trho$ of trace $1$. Normality of $\rho$ is equivalent to $\mu_\rho=\rho|_{\PN}$ being a join-preserving map: let $(r_i)_{i\in I}\subseteq\eR$ be a family of (extended) real numbers. For each $n=1,2,...,\#I$, where $\#I$ is the cardinality of $I$, let $I_n$ denote an $n$-element subset of $I$ such that $I_n\subsetneq I_{n+1}$, and define
\begin{equation}
			\hP_n:=\bjoin_{j\in I_n}\hEA_{r_j}.
\end{equation}
Then $(\hP_n)_{n=1,2,...,\#I}$ is an increasing net of projections in $\PN$ converging strongly to $\bjoin_{i\in I}\hEA_{r_i}$. Since a normal state $\rho$ preserves suprema of increasing nets, we have
\begin{equation}			\label{Eq_murhoPreservesJoins}
			\mu_\rho(\bjoin_{i\in I}\hEA_{r_i})=\rho(\bjoin_{i\in I}\hEA_{r_i})=\sup_{i\in I}\rho(\hEA_{r_i})=\sup_{i\in I}\mu_\rho(\hEA_{r_i}).
\end{equation}

There is a pairing between quantum random variables and quantum states (in the form of probability measures on projections). The probability that a quantum random variable $\hA$ has a value in the Borel subset $\De\subseteq\sp\hA$ is given by
\begin{align}
			\on{Prob}(\hA,\De)=\mu_\rho(\eA(\De)).
\end{align}
This is the direct analogue of the classical
\begin{equation}
			\on{Prob}(A,\Gamma)=\mu(A^{-1}(\Gamma)),
\end{equation}
where $A$ is a random variable and $\Gamma$ is a Borel subset of $\bbR$ (or $\eR$). 

The usual CDF of a quantum random variable $\hA$ with respect to a state $\rho$ is given by
\begin{align}
			C^{\hA}=\mu_\rho\circ\EA:\eR &\lra [0,1]\\			\nonumber
			r &\lmt \mu_\rho(\hEA_r),
\end{align}
cf. \eq{Eq_CDFDecomposed}. We will now show that $C^{\hA}$ has a left adjoint. For this, we need that $C^{\hA}$ preserves meets. The extended spectral family $\EA:\eR\ra\PN$ preserves meets due to monotonicity and right-continuity (for details, see section 3 in Part I \cite{DoeDew12a}), which implies that $C^{\hA}$ preserves meets, so we just have to show that $\mu_\rho$ preserves meets.

Let $(r_i)_{i\in I}$ be an arbitrary family of real numbers. Then
\begin{align}
			\mu_\rho(\bmeet_{i\in I}\hEA_{r_i}) &= \mu_\rho(\hat 1-\bjoin_{i\in I}(\hat 1-\hEA_{r_i}))\\
			&= 1-\mu_\rho(\bjoin_{i\in I}(\hat 1-\hEA_{r_i}))\\
			&\stackrel{\eq{Eq_murhoPreservesJoins}}{=} 1-\sup_{i\in I}\mu_\rho(\hat 1-\hEA_{r_i})\\
			&=\inf_{i\in I}\mu_\rho(\hEA_{r_i}).
\end{align}
Hence, $\mu_\rho:\PN\ra [0,1]$ has a left adjoint
\begin{align}
			\kappa_\rho: [0,1] &\lra \PN\\			\nonumber
			s &\lmt \bmeet\{\hP\in\PN \mid s\leq\mu_\rho(\hP)\},
\end{align}
and $C^{\hA}=\mu_\rho\circ\EA$ has a left adjoint
\begin{align}
			q^{\hA}=\oA\circ\kappa_\rho: [0,1] &\lra \eR\\			\nonumber
			s &\lmt \oA(\kappa_\rho(s)).
\end{align}
This $q^{\hA}$ is the usual quantile function of the quantum random variable $\hA$ with respect to the normal state $\rho$.

We can also obtain the quantile function $q^{\hA}$ directly as the left adjoint of $C^{\hA}$, that is,
\begin{align}
			q^{\hA}: [0,1] &\lra \eR\\			\nonumber
			s &\lmt \inf\{r\in\eR \mid s\leq C^{\hA}(r)\}.
\end{align}

\begin{lemma}
The two expressions for the quantile function $q^{\hA}$ coincide, that is,
\begin{equation}
			\forall s\in[0,1]: (\oA\circ\kappa_\rho)(s)=\inf\{r\in\eR \mid s\leq C^{\hA}(r)\}.
\end{equation}
\end{lemma}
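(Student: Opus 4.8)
The plan is to avoid manipulating nested infima and instead recognise both sides of the claimed identity as left adjoints of one and the same meet-preserving map, then appeal to the uniqueness of adjoints in a poset Galois connection. Recall that for a monotone map $f\colon P\to Q$ between posets possessing a left adjoint $g\colon Q\to P$, the two are linked by the defining equivalence $g(q)\leq p\iff q\leq f(p)$, and that such a $g$, when it exists, is uniquely determined by $f$. Three such adjunctions are already available: $\oA\dashv\EA$, i.e.\ $\oA(\hP)\leq r\iff\hP\leq\hEA_r$; the pair $\kappa_\rho\dashv\mu_\rho$, i.e.\ $\kappa_\rho(s)\leq\hP\iff s\leq\mu_\rho(\hP)$; and, by the very definition of the quantile function, $q^{\hA}\dashv C^{\hA}$, i.e.\ $q^{\hA}(s)\leq r\iff s\leq C^{\hA}(r)$.

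First I would verify that $\oA\circ\kappa_\rho$ is a left adjoint of $C^{\hA}=\mu_\rho\circ\EA$ by composing the two component equivalences. For all $s\in[0,1]$ and $r\in\eR$,
\begin{align*}
(\oA\circ\kappa_\rho)(s)\leq r
&\iff \kappa_\rho(s)\leq\hEA_r &&\text{(adjunction $\oA\dashv\EA$)}\\
&\iff s\leq\mu_\rho(\hEA_r) &&\text{(adjunction $\kappa_\rho\dashv\mu_\rho$)}\\
&\iff s\leq C^{\hA}(r), &&\text{(definition of $C^{\hA}$)}
\end{align*}
the last step being nothing more than $C^{\hA}(r)=\mu_\rho(\hEA_r)$. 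This chain is exactly the defining equivalence for a left adjoint of $C^{\hA}$.

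Then I would invoke uniqueness: $q^{\hA}$ is by definition the left adjoint of $C^{\hA}$, and $\oA\circ\kappa_\rho$ has just been shown to satisfy the same adjunction, so the two maps must coincide pointwise, which is precisely the assertion $\forall s\in[0,1]:(\oA\circ\kappa_\rho)(s)=\inf\{r\in\eR\mid s\leq C^{\hA}(r)\}$. I expect no real obstacle here, as this is simply the principle that adjoints compose; the only points needing care are that $\EA$ and $\mu_\rho$ each preserve all meets (already established above), which is what legitimates both the existence of the left adjoints and the equivalences used in the display. A more computational alternative would expand $\inf\{r\in\eR\mid s\leq C^{\hA}(r)\}$ and reconcile it directly with the nested infima defining $\oA(\kappa_\rho(s))$; this succeeds as well, but the order-of-infima bookkeeping it requires is exactly what the adjoint-composition argument renders unnecessary.
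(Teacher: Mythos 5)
Your proof is correct, and it takes a cleaner, more abstract route than the paper's. The paper argues directly on the level of the two sets of real numbers whose infima are being compared: it observes that $\hP\leq\hEA_r$ implies $\mu_\rho(\hP)\leq\mu_\rho(\hEA_r)$, that the converse fails in general, but that it does hold for the particular choice $\hP=\kappa_\rho(s)$ because $\kappa_\rho(s)$ is the \emph{smallest} projection with $\mu_\rho(\hP)\geq s$ --- so $\mu_\rho(\hEA_r)\geq s$ forces $\hEA_r\geq\kappa_\rho(s)$, and the two sets $\{r\mid s\leq\mu_\rho(\hEA_r)\}$ and $\{r\mid\kappa_\rho(s)\leq\hEA_r\}$ coincide. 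You instead verify the defining equivalence $(\oA\circ\kappa_\rho)(s)\leq r\iff s\leq C^{\hA}(r)$ by chaining the two component adjunctions and then invoke uniqueness of left adjoints; this is exactly the general principle that left adjoints compose, of which the paper's computation is the hands-on unfolding. What your version buys is that the bookkeeping with nested infima disappears and the argument generalises verbatim to any meet-preserving $\mu$ on any complete meet-semilattice $L$ (cf.\ the decomposition $q^A=\tqA\circ\kappa$ in the earlier subsection); what the paper's version makes explicit is \emph{where} the argument actually uses the special property of $\kappa_\rho(s)$. Do note that both proofs lean on the same preceding claim, namely that $\mu_\rho$ preserves the relevant meets so that the unit inequality $\mu_\rho(\kappa_\rho(s))\geq s$ holds --- your second equivalence $\kappa_\rho(s)\leq\hEA_r\iff s\leq\mu_\rho(\hEA_r)$ is exactly as strong as the paper's appeal to $\kappa_\rho(s)$ being the smallest projection of measure at least $s$ --- so you are not assuming anything the paper does not.
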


\begin{proof}
We have
\begin{equation}			\label{Eq_qADef1}
			\inf\{r\in\eR \mid s\leq C^{\hA}(r)\} = \inf\{r\in\eR \mid s\leq\mu_\rho(\hEA_r)\},
\end{equation}
which is the smallest $r\in\eR$ for which $\mu_\rho(\hEA_r)\geq s$. On the other hand,
\begin{equation}			\label{Eq_qADef2}
			\oA(\kappa_\rho(s))=\inf\{r\in\eR \mid \bmeet\{\hP\in\PN \mid s\leq\mu_\rho(\hP)\leq\hEA_r\}.
\end{equation}
Note that $\hP\leq\hEA_r$ implies $\mu_\rho(\hP)\leq\mu_\rho(\hEA_r)$. Conversely, $\mu_\rho(\hP)\leq\mu_\rho(\hEA_r)$ does not in general imply $\hP\leq\hEA_r$. But if we pick $\hP=\kappa_\rho(s)$, which is the \emph{smallest} projection for which $\mu_\rho(\hP)\geq s$ holds, then $\mu_\rho(\hEA_r)\geq\mu_\rho(\hP)\geq s$ implies $\hEA_r\geq\hP=\kappa_\rho(s)$. This shows that \eq{Eq_qADef1} and \eq{Eq_qADef2} coincide.
\end{proof}

\textbf{Summary.} We summarise the relations between classical and quantum probability in table 1. On the classical side, we assume that $\cB(\Omega)$ is a complete Boolean algebra, not just a $\sigma$-complete one.

The relevant complete meet-semilattices are $L=\cB(\Omega)$ on the classical side and $L=\PN$ on the quantum side. Note that as discussed above, on the classical side we can replace 
\begin{itemize}
	\item [(i)] the sample space $\Omega$ by the Gelfand spectrum $\Sigma_V$ of some abelian von Neumann algebra $V$,
	\item [(ii)] each random variable $A$ by the Gelfand transform $\ol A:\Sigma_V\ra\sp\hA\subset\eR$ of some (bounded) self-adjoint operator $\hA\in V$, 
	\item [(iii)] the complete Boolean algebra $\cB(\Omega)$ by the complete Boolean algebra $\Cl(\Sigma_V)$.
\end{itemize}
This emphasises the role of the Gelfand spectrum $\Sigma_V$ as a sample space in the classical case.

\begin{table}[htdp]
\begin{center}			\label{Table_CAndQProbab1}
\begin{tabular}{| c | c | c |} \hline
 & & \\
 & Classical & Quantum \\
 & & \\
\hline & & \\
Sample space & $\Omega$ & $\cH$\\
 & & \\
\hline & & \\
Random variable & $A:\Omega\ra\im A\subset\eR$ & $\hA\in \cN_{\sa}$\\
 & & \\
\hline & & \\
Inv. im. of random var. & $A^{-1}:\cB(\eR)\ra\cB(\Omega)$ & $e^{\hA}:\cB(\eR) \ra \PN$\\
 & & \\
\hline & & \\
$L$-CDF & $\tCA:\eR \ra \cB(\Omega)$ & $\EA:\eR\ra\PN$\\
 & $r \mt A^{-1}([-\infty,r])$ & $r \mt \hEA_r=\eA([-\infty,r])$\\
 & & \\
\hline & & \\
$L$-quantile function & $\tqA:\cB(\Omega)\ra\eR$ & $\tilde q^{\hA}=\oA:\PN\ra\eR$\\
 & $S \mt \inf\{r\in\eR \mid S\leq\tCA(S)\}$ & $\hP \mt \inf\{r\in\eR \mid \hP\leq \hEA_r\}$\\
 & & \\
\hline & & \\
State & $\mu:\cB(\Omega)\ra [0,1]$ & $\mu_\rho:\PN\ra [0,1]$\\
 & & \\
\hline & & 
\\CDF & $C^A:\eR\ra [0,1]$ & $C^{\hA}:\eR\ra [0,1]$\\
 & $r \mt \mu(A^{-1}([-\infty,r]))$ & $r \mt \mu_\rho(\eA([-\infty,r]))$ \\
 & & \\
\hline & & \\
Quantile function & $q^A:[0,1]\ra\eR$ & $q^{\hA}:[0,1]\ra\eR$\\
 & $s \mt \inf\{r\in\eR \mid s\leq C^A(r)\}$ & $s \mt \inf\{r\in\eR \mid s\leq C^{\hA}(r)\}$ \\
 & & \\
\hline
\end{tabular}
\end{center}
\caption{Comparison between classical and quantum probability}
\end{table}

\section{Quantum probability and the spectral presheaf}			\label{Sec_QProbabAndSigma}
The analogy between classical and quantum probability can be strengthened if we can find a suitable sample space for the quantum side, in analogy to the Gelfand spectrum $\Sigma_V$ of an abelian von Neumann algebra $V$. Given a nonabelian von Neumann algebra $\cN$, with the self-adjoint operators in $\cN$ representing random variables, such a sample space $\Sigma$ should
\begin{itemize}
	\item generalise the Gelfand spectrum $\Sigma_V$ to the nonabelian von Neumann algebra $\cN$,
	\item come equipped with a family of measurable subsets, analogous to the clopen subsets $\Cl(\Sigma_V)$ of $\Sigma_V$,
	\item serve as a common domain for the random variables, and hence as a common codomain for the associated spectral measures,
	\item serve as a domain for the states of $\cN$, seen as probability measures.
\end{itemize}
The topos approach to quantum theory (see \cite{DoeIsh08a,DoeIsh08b,DoeIsh08c,DoeIsh08d,Doe09a,Doe09b,DoeIsh11,Doe11a,Doe11b,DoeIsh12} and \cite{HLS09a,HLS09b,HLS11}) provides such a generalised sample space, in the form of the spectral presheaf $\Sig$ of a von Neumann algebra $\cN$. We will introduce this presheaf shortly, but first we consider the base category over which it is defined.

\subsection{The context category and the spectral presheaf of a quantum system}
\begin{definition}
Let $\cN$ be a von Neumann algebra, and let $\VN$ be the set of non-trivial abelian von Neumann subalgebras of $\cN$ that share the unit element with $\cN$. When equipped with inclusion as a partial order, $\VN$ is called the \emph{context category of $\cN$}.
\end{definition}
The physical idea is that each context, i.e., each abelian subalgebra $V\in\VN$, provides a \emph{classical perspective} on the quantum system at hand. The poset $\VN$ keeps track of how classical perspectives relate to each other, that is, how they overlap.

Given a von Neumann algebra $\cN$, one can define a Jordan algebra which has the same elements as $\cN$ and multiplication given by
\begin{equation}
			\forall \hA,\hB\in\cN : \hA\cdot\hB:=\frac{1}{2}(\hA\hB+\hB\hA).
\end{equation}
It was shown in \cite{HarDoe10} that the poset $\VN$ determines the algebra $\cN$ as a Jordan algebra, i.e., up to Jordan isomorphisms. 

The probabilistic aspects of quantum theory only depend on the Jordan structure on the set of self-adjoint operators representing physical quantities, as was already shown by Jordan, von Neumann and Wigner \cite{JNW34}. Hence, it is plausible that the probabilistic aspects of quantum theory can be formulated using structures over the poset $\VN$, which determines the Jordan structure of the quantum system described by $\cN$. In the following, we will demonstrate that this is indeed the case.

\begin{definition}
Let $\cN$ be a von Neumann algebra, and let $\VN$ be its context category. The \emph{spectral presheaf $\Sig$ of $\cN$} is the presheaf over $\VN$ given as follows:
\begin{itemize}
	\item [(a)] on objects: for all $V\in\VN$, let $\Sig_V$ be the Gelfand spectrum of $V$, that is, the set of pure states of $V$ equipped with the relative weak$^*$-topology,
	\item [(b)] on arrows: for all inclusions $i_{V'V}: V'\hookrightarrow V$, let $\Sig(i_{V'V})$ be function
	\begin{align}
				\Sig(i_{V'V}):\Sig_V &\lra \Sig_{V'}\\			\nonumber
				\ld &\lmt \ld|_{V'}.
	\end{align}
\end{itemize}
\end{definition}
It is well-known that the restriction map $\Sig(i_{V'V})$ is surjective and continuous. It is also closed and open.

The spectral presheaf $\Sig$, which `glues' together the Gelfand spectra of all abelian subalgebras of $\cN$ in a canonical way, will serve as the quantum sample space. Clearly, $\Sig$ is a generalisation of the Gelfand spectrum $\Sigma_V$ of an abelian von Neumann algebra $V$. Being a presheaf over $\VN$, $\Sig$ is an object in the topos $\SetVNop$ of presheaves over $\VN$. Each topos can be seen as a universe of (generalised) sets, see e.g. \cite{McLMoe92,Joh02/03}, so $\Sig$ is a generalised set in this sense.

Next, we have to specify suitable measurable sub`sets' of $\Sig$, which in this case are subobjects of $\Sig$. A subobject of a presheaf is a subpresheaf.

\begin{definition}
A subobject $\ps S$ of the spectral presheaf $\Sig$ is called \emph{clopen} if, for all $V\in\VN$, the component $\ps S_V$ is clopen in $\Sig_V$, the Gelfand spectrum of $V$. The set of clopen subobjects of $\Sig$ is denoted as $\Subcl\Sig$.
\end{definition}

A clopen subobject $\ps S\subseteq\Sig$ hence is a collection $(\ps S_V)_{V\in\VN}$ of clopen subsets, one in each Gelfand spectrum $\Sig_V$, $V\in\VN$. The condition of being a subobject means that whenever $V',V\in\VN$ such that $V'\subseteq V$, it holds that
\begin{equation}
			\Sig(i_{V'V})(\ps S_V)\subseteq\ps S_{V'}.
\end{equation}
The set $\Subcl\Sig$ of clopen subobjects is partially ordered in an obvious way:
\begin{equation}
			\ps S_1\leq\ps S_2: \quad\Longleftrightarrow\quad (\forall V\in\VN:\ps S_{1;V}\subseteq \ps S_{2;V}).
\end{equation}
With respect to this partial order, all meets and joins exist, so $\Subcl\Sig$ is a complete lattice. Concretely, given a family $(\ps S_i)_{i\in I}$ of clopen subobjects, we have
\begin{align}
			\forall V\in\VN: &(\bjoin_{i\in I}\ps S_i)_V=\on{cl}(\bigcup_{i\in I}\ps S_{i;V}),\\
			&(\bmeet_{i\in I}\ps S_i)_V=\on{int}(\bigcap_{i\in I}\ps S_{i;V}).
\end{align}
Taking the closure of the set-theoretic union respectively the interior of the intersection is necessary in order to obtain clopen subsets at each stage. Note that for each $V$, $\Sig_V$ is extremely disconnected, so the closure of an open set is open, and the interior of a closed set is closed.

The lattice $\Subcl\Sig$ does \emph{not} inherit the orthocomplementation from its components, the complete Boolean algebras $\Cl(\Sig_V)$, $V\in\VN$. Taking the complement in each component of a clopen subobject $\ps S\in\Subcl\Sig$ does not give a subobject (unless $\ps S$ is the empty subobject or $\Sig$ itself). Instead, $\Subcl\Sig$ is a complete Heyting algebra (see Thm. 2.5 in \cite{DoeIsh08b}) with a pseudo-complement
\begin{align}
				\neg:\Subcl\Sig &\lra \Subcl\Sig\\			\nonumber
				\ps S &\lmt \bjoin\{\ps T\in\Subcl\Sig \mid \ps S\wedge\ps T=\ps 0\},
\end{align}
where $\ps 0$ is the empty subobject. In general, $\neg\ps S\vee\ps S\leq\Sig$. In terms of the Heyting implication (which we will not discuss here, but see \cite{DoeIsh08b,Doe12}), the negation is given by $\neg\ps S=(\ps S\Rightarrow\ps 0)$ as usual. 

Interestingly, $\Subcl\Sig$ has even more structure. It is also a complete co-Heyting algebra with a second kind of pseudo-complement, given by
\begin{align}
			\sim:\Subcl\Sig &\lra \Subcl\Sig\\			\nonumber
			\ps S &\lmt \bmeet\{\ps T\in\Subcl\Sig \mid \ps S\vee\ps T=\Sig\}.
\end{align}
In general, $\sim\ps S\wedge\ps S\geq\ps 0$. 

This makes $\Subcl\Sig$ into a complete bi-Heyting algebra. We remark that every Boolean algebra is a bi-Heyting algebra in which both kinds of negation coincide. For more details, see \cite{Doe12}.

The complete bi-Heyting algebra $\Subcl\Sig$ of clopen subobjects of the spectral presheaf $\Sig$ of a noncommutative von Neumann algebra is the quantum analogue of the complete Boolean algebra $\Cl(\Sigma_{\VlA})$ of clopen subsets of the Gelfand spectrum $\Sigma_{\VlA}$ of a commutative von Neumann algebra $\VlA$. 

Importantly, the spectral presheaf $\Sig$ has no global elements if $\cN$ has no summand of type $I_2$, that is, one cannot choose one element $\ld_V$ from each component $\Sig_V$ such that, whenever $V'\subset V$, it would hold that
\begin{equation}
			\Sig(i_{V'V})(\ld_V)=\ld_{V'}.
\end{equation}
In fact, such a choice of elements $(\ld_V)_{V\in\VN}$ would give a \emph{valuation function}, that is, a function $v:\cN_{\sa}\ra\bbR$ such that for all $\hA\in\cN_{\sa}$, it would hold that $v(\hA)\in\sp\hA$, and for all bounded Borel functions $f:\bbR\ra\bbR$, one would have $v(f(\hA))=f(v(\hA))$.

The Kochen-Specker theorem shows that such valuation functions do not exist if $\cN$ has no summand of type $I_2$ (see \cite{KocSpe67}; for the generalisation to von Neumann algebras, see \cite{Doe05}). Isham and Butterfield observed in \cite{IshBut98,IshBut00} that the Kochen-Specker theorem is equivalent to the fact that the spectral presheaf $\Sig$ has no global elements.

Since global elements of a presheaf are the analogues of elements of a set (or points of a space), the quantum sample space $\Sig$ is a space without points in this sense. The fact that $\Sig$ has no points is ultimately the reason why we can reformulate quantum probability based on the set-like object $\Sig$, without falling prey to the Kochen-Specker theorem.\footnote{Related to this, pure states in the topos-based formulation of quantum theory do not correspond to points of the sample space (as they do in classical theory), quite simply because the sample space $\Sig$ has no points. We do not discuss the representation of pure states here; see \cite{DoeIsh11}. The representation of general states is given in subsection \ref{Subsec_StatesAsProbabMeasures} below.}


\subsection{Random variables and their inverse images}			\label{SubSec_QRandomVars}
In the perspective of the topos approach to quantum theory, a random variable is a generalised function from the sample space $\Sig$ (which in fact is a presheaf) to a presheaf of real numbers. There is a representation of self-adjoint operators $\hA\in\cN_{\sa}$ by arrows
\begin{equation}			\label{Def_bdas(A)}
			\bdas{\hA}:\Sig \lra \Rlr
\end{equation}
in the topos $\SetVNop$. This representation was discussed in detail elsewhere \cite{DoeIsh08c,Doe11b}, so we will not give details here. In a nutshell, $\Sig$ physically plays the role of a (generalised) state space of the quantum system, while $\Rlr$ is a space of values of physical quantities, generalising the real numbers in the sense that not only sharp, definite real values exist, but also `unsharp' values in the form of real intervals. The definition of the arrow $\bdas{\hA}$ is based on the maps called \emph{inner} and \emph{outer daseinisation of self-adjoint operators}, which are approximations with respect to the spectral order on self-adjoint operators. For details, see \cite{Doe11b} and Part I \cite{DoeDew12a}.

The arrow $\bdas{\hA}$ is the analogue of a function $f_A$ from the state space to the real numbers. (Such a function $f_A$ represents a physical quantity $A$ in the classical case). Mathematically, $\bdas{\hA}$ generalises the Gelfand transform $\ol A:\Sigma_V\ra\bbR$ of a self-adjoint operator in an abelian von Neumann algebra $V$. In a probabilistic view of quantum theory, we can interpret the arrow $\bdas{\hA}$ as a random variable (associated with the physical quantity $A$).

But note that here, both the sample space $\Sig$ and the space of values $\Rlr$ are presheaves, i.e., objects in the topos of presheaves, and the arrow $\bdas{\hA}:\Sig\ra\Rlr$ is a natural transformation, that is, an arrow in the topos. In the following, we will consider a more modest approach, using only the generalised sample space $\Sig$, while the space of values will be the traditional real numbers $\bbR$ or extended real numbers $\eR$.

\noindent\textbf{From projection-valued measures to clopen subobject-valued measures.} We will not be concerned with the representation of random variables per se.\footnote{They can be defined as suitable partial functions from $\Sig$, or rather, the disjoint union of its fibres, to $\eR$.} As in the rest of the paper, the `inverse image map', from Borel subsets of $\eR$ to $\Subcl\Sig$, the measurable subobjects of $\Sig$, will be more important. That is, the inverse image maps measurable subsets of outcomes to measurable subsets of the sample space. A probability measure then maps measurable subsets of the sample space to probabilities.

We will relate a (topos-external) Borel set $\De$ of outcomes to a clopen subobject of the (topos-internal) sample space $\Sig$ via a `translation map' (which in fact is the map called the outer daseinisation of projections, see Def. \ref{Def_OuterDasOfProjs}). We will show that this construction, together with the representation of quantum states as probability measures on $\Sig$, reproduces the Born rule and hence the usual quantum mechanical predictions of expectation values. 

Given a self-adjoint operator $\hA$, we obtain a projection $\hP=\eA(\De)$ by the spectral theorem, where $\eA:\cB(\eR)\ra\PN$ is the (extended) spectral measure of $\hA$, see \eq{Def_SpecMeasOnExtendedReals}.\footnote{In previous papers by AD and C. Isham, the notation $\hE[\Ain\De]$ had been used for $\eA(\De)$.} In standard quantum theory, this projection represents the proposition ``$\Ain\De$'', that is, ``if a measurement of the physical quantity $A$ is performed, the measurement outcome will be found to lie in the Borel set $\De$''.

\begin{remark}
In the topos approach to quantum theory, the proposition ``$\Ain\De$'' is often interpreted in a more `realist' or `ontological' manner to mean ``the physical quantity $A$ has a value in the set $\De$''. The truth value of such a proposition in a given state $\rho$ of the quantum system in general is neither (totally) true nor (totally) false. The key idea is to use the richer, intuitionistic internal logic of the topos $\SetVNop$ to assign a truth value -- and not just a probability -- to the proposition, independent of measurements and observers. This is described in detail for pure states in \cite{DoeIsh11} and for mixed states in \cite{DoeIsh12}. In the present article, we are concerned with aspects of a probabilistic interpretation of quantum theory, so we focus on the more `instrumentalist' or `operational' interpretation of ``$\Ain\De$'' so that it means ``if a measurement of $A$ is performed, the outcome will lie in $\De$''.
\end{remark}

Next, we have to relate projections to clopen subobjects of the sample space $\Sig$. Let $\hP\in\PN$ be a projection, and let $V\in\VN$ be a context. We approximate $\hP$ in $V$ (from above) by
\begin{equation}			\label{Def_OuterDasOfPToV}
			\doto{\hP}{V}:=\bmeet\{\hQ\in\PV \mid \hP\leq\hQ\}.
\end{equation}
The projection $\doto{\hP}{V}$ is called the outer daseinisation of $\hP$ to $V$. Using the isomorphism $\alpha_V:\PV\ra\Cl(\Sig_V)$ (see \eq{Def_alphaV}), the projection $\doto{\hP}{V}$ in $V$ corresponds to the clopen subset 
\begin{equation}
			S_{\doto{\hP}{V}}:=\alpha_V(\doto{\hP}{V})
\end{equation}
of $\Sig_V$. Now, taking \emph{all} contexts $V\in\VN$ into account at the same time, we define:

\begin{definition}			\label{Def_OuterDasOfProjs}
The map
\begin{align}
			\ps\de:\PN &\lra \Subcl\Sig\\			\nonumber
			\hP &\lmt \pde{\hP}=(S_{\doto{\hP}{V}})_{V\in\VN}
\end{align}
is called \emph{outer daseinisation of projections}.
\end{definition}

Hence, the map $\ps\de$ maps a projection $\hP$ in a von Neumann algebra $\cN$ to a clopen subobject $\pde{\hP}$ of the spectral presheaf $\Sig$ of the algebra. Locally, at each abelian subalgebra $V\in\VN$, the map $\ps\de$ is given by the outer daseinisation $\doto{\hP}{V}$ of $\hP$ to $V$ (see \eq{Def_OuterDasOfPToV} above), which justifies calling the map $\ps\de$ by the same name.

The map $\ps\de$ was introduced in \cite{DoeIsh08b} and discussed further in \cite{Doe11a,Doe11b,Doe12}. Its main properties are:
\begin{itemize}
	\item [(a)] $\ps\de$ is monotone, $\pde{\hat 0}=\ps\emptyset$ and $\pde{\hat 1}=\Sig$,
	\item [(b)] $\ps\de$ is injective, but not surjective,
	\item [(c)] for all families $(\hP_i)_{i\in I}$ of projections in $\cN$, it holds that $\pde{\bjoin_i \hP_i}=\bjoin_i \pde{\hP_i}$, that is, $\ps\de$ preserves all joins,
	\item [(d)] for all $\hP,\hQ\in\PN$, $\pde{\hP\meet\hQ}\leq\pde{\hP}\meet\pde{\hQ}$.
\end{itemize}

If the projection $\hP$ is a spectral projection of some self-adjoint operator $\hA$, that is, if $\hP=\eA(\De)$, we write
\begin{equation}
			\ps S(\hA;\De):=\pde{\eA(\De)}.
\end{equation}
Note that if we consider a context $V$ such that $\hA\in V$, then all the spectral projections $\eA(\De)$ are contained in $V$, so $\doto{\eA(\De)}{V}=\eA(\De)$. Hence, the component of the clopen subobject $\ps S(\hA;\De)$ at $V$ is given by
\begin{equation}
			\ps S(\hA;\De)_V=S_{\doto{\eA(\De)}{V}}=\alpha_V(\doto{\eA(\De)}{V})=\alpha_V(\eA(\De))
\end{equation}
which is the clopen subset of $\Sig_V$ corresponding to the usual spectral projection under the isomorphism $\alpha_V:\PV\ra\Cl(\Sig_V)$. 

If we speak of the measurement of a physical quantity $A$ in standard quantum theory, we of course (implicitly) pick a context $V$ that contains $\hA$, the self-adjoint operator representing $A$. In such contexts, the clopen subobject $\ps S(\hA;\De)$ corresponds to the usual spectral projection $\eA(\De)$. Yet, different from standard quantum theory, we also define approximations to a spectral projection of $\hA$ in those contexts $V$ that do not contain $\hA$, in the form of the coarse-grained (i.e., larger) projections $\doto{\eA(\De)}{V}$ or their corresponding clopen subsets $S_{\doto{\eA(\De)}{V}}$. The clopen subobject $\ps S(\eA(\De))$ collates all these approximations into one object, which can be interpreted as a measurable subobject of the quantum sample space $\Sig$.

Summing up, we have defined a map
\begin{equation}
			\breve A^{-1}: \cB(\eR) \lra \Subcl\Sig
\end{equation}
from Borel subsets of the (extended) real line to clopen subobjects of the joint sample space, given by
\begin{equation}
			\breve A^{-1}(\De)=\pde{\eA(\De)}=\ps S(\hA;\De).
\end{equation}

In the new topos-based perspective we are developing here, $\breve A^{-1}$ is the inverse image of the random variable (physical quantity) $A$. We have replaced the projection-valued spectral measure $\eA:\cB(\eR)\ra\PN$ by a clopen subobject-valued measure $\breve A^{-1}:\cB(\eR)\ra\Subcl\Sig$. 

\begin{remark}
A full topos-internal treatment of quantum random variables and their inverse images would proceed by using the arrow $\bdas{\hA}:\Sig \lra \Rlr$ (see \eqref{Def_bdas(A)}) and its `inverse image', which involves pullbacks of suitably defined `clopen subobjects' of $\Rlr$.
\end{remark}

\subsection{$\Subcl\Sig$-valued CDFs and quantile functions}
Given a random variable, described by a self-adjoint operator $\hA$ or the corresponding inverse image map $\breve A^{-1}:\cB(\eR)\ra\Subcl\Sig$, we can now define a clopen subobject-valued version of the cumulative distribution function:
\begin{align}			\label{Def_EbrA}
			\EbrA:\eR &\lra \Subcl\Sig\\			\nonumber
			r &\lmt \breve A^{-1}([-\infty,r])=\pde{\hEA_r}=\ps S(\hA;[-\infty,r]).
\end{align}
Note that we can include $-\infty$ in the sets $[-\infty,r]$ since we define our map over the extended reals. Clearly, $\EbrA$ is monotone and $\EbrA(-\infty)=\ps\emptyset$, the empty subobject of $\Sig$ (since $\hEA_{-\infty}=\hat 0$), and $\EbrA(\infty)=\Sig$ (since $\hEA_{\infty}=\hat 1$).

\begin{proposition}			\label{Prop_EbrARightCont}
Let $\cN$ be a von Neumann algebra, and let $\hA\in\cN_{\sa}$ a self-adjoint operator. The map $\EbrA:\eR\ra\Subcl\Sig$ is right-continuous.
\end{proposition}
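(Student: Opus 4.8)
The plan is to establish right-continuity in the order-theoretic sense appropriate to the complete lattice $\Subcl\Sig$, namely that $\EbrA(r)=\bmeet_{s>r}\EbrA(s)$ for every $r\in\eR$. Since $\EbrA$ is monotone (as already observed), we have $\EbrA(r)\leq\EbrA(s)$ for all $s>r$ and hence $\EbrA(r)\leq\bmeet_{s>r}\EbrA(s)$ for free; the content of the proposition is therefore the reverse inequality $\bmeet_{s>r}\EbrA(s)\leq\EbrA(r)$. This is exactly the statement that $\EbrA$ commutes with the meets arising from decreasing nets, which is the correct reading of right-continuity for a monotone map into a complete lattice.

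First I would localise the claim to each context. Recall that meets in $\Subcl\Sig$ are computed componentwise as interiors of intersections, $(\bmeet_i\ps S_i)_V=\on{int}(\bigcap_i\ps S_{i;V})$, and that for each $V\in\VN$ the map $\alpha_V:\PV\ra\Cl(\Sig_V)$ of \eq{Def_alphaV} is an isomorphism of complete Boolean algebras and hence preserves all meets. Since the $V$-component of $\EbrA(s)$ is $\alpha_V(\doto{\hEA_s}{V})$, applying $\alpha_V^{-1}$ converts the componentwise meet in $\Cl(\Sig_V)$ into the meet in $\PV$. Consequently, right-continuity of $\EbrA$ is equivalent to the family of local identities
\[
	\doto{\hEA_r}{V}=\bmeet_{s>r}\doto{\hEA_s}{V}\quad\text{in }\PV, \qquad\text{for all } V\in\VN .
\]
Thus the entire proposition reduces to showing that the outer daseinisation of the spectral projections, carried out context by context, is right-continuous in $r$.

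The main input is right-continuity of the underlying spectral family, $\hEA_r=\bmeet_{s>r}\hEA_s$, recalled in Section~\ref{Sec_ResultsPartI}. One inequality is again immediate: by monotonicity of $\doto{(\cdot)}{V}$ one has $\doto{\hEA_r}{V}\leq\doto{\hEA_s}{V}$ for $s>r$, so $\doto{\hEA_r}{V}\leq\bmeet_{s>r}\doto{\hEA_s}{V}$. The real content, and the step I expect to be the main obstacle, is the reverse inequality $\bmeet_{s>r}\doto{\hEA_s}{V}\leq\doto{\hEA_r}{V}$; equivalently, every $\hat R\in\PV$ that is a lower bound of the decreasing net $(\doto{\hEA_s}{V})_{s>r}$ must already satisfy $\hat R\leq\doto{\hEA_r}{V}$. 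The difficulty is structural: outer daseinisation preserves arbitrary joins (property~(c) of $\ps\de$) but in general only satisfies $\pde{\hP\meet\hQ}\leq\pde{\hP}\meet\pde{\hQ}$ (property~(d)), so the meet $\bmeet_{s>r}$ cannot simply be pulled through $\doto{(\cdot)}{V}$, and one cannot treat $(\hEA_s)_{s>r}$ as a generic decreasing net. Instead I would exploit that this net is the tail of a right-continuous spectral family with $\bmeet_{s>r}\hEA_s=\hEA_r$, and work inside the abelian algebra $V$, where $\PV$ is a complete Boolean algebra and $(\doto{\hEA_s}{V})_{s}$ is an increasing family of commuting projections forming the spectral data in $V$ of a daseinised operator associated with $\hA$. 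The task is then to show that passing to the right-limit inside $V$ introduces no jump beyond $\doto{\hEA_r}{V}$, i.e.\ that the smallest $V$-projection dominating $\hEA_r$ is already the infimum of the smallest $V$-projections dominating the $\hEA_s$ with $s>r$. Making this precise — controlling how the $V$-projections $\doto{\hEA_s}{V}$ approximate $\doto{\hEA_r}{V}$ as $s\downarrow r$, rather than merely how the $\hEA_s$ approximate $\hEA_r$ — is where I expect the bulk of the work to lie.
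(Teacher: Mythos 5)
Your proposal is not a complete proof. After reducing the claim to the componentwise identity $\doto{\hEA_r}{V}=\bmeet_{s>r}\doto{\hEA_s}{V}$ in $\PV$ (which is indeed how the paper proceeds) and disposing of the easy inequality $\doto{\hEA_r}{V}\leq\bmeet_{s>r}\doto{\hEA_s}{V}$, you explicitly leave the reverse inequality as an open task; since that inequality is the entire content of the proposition, the argument as written has a genuine gap. The paper's own proof follows exactly your reduction but closes the gap in a single line, by asserting
\begin{equation*}
\bmeet_{s>r}\bmeet\{\hQ\in\PV\mid\hEA_s\leq\hQ\}\;=\;\bmeet\{\hQ\in\PV\mid\bmeet_{s>r}\hEA_s\leq\hQ\},
\end{equation*}
i.e.\ by commuting the meet over $s>r$ with the meet defining outer daseinisation.

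You are right to be suspicious of precisely this exchange, and your diagnosis (daseinisation preserves joins, not meets) identifies the weak point of the paper's argument rather than a difficulty you merely failed to overcome. The map $\doto{\cdot}{V}$ is the left adjoint of the meet-preserving inclusion $\PV\hookrightarrow\PN$, so it preserves joins but has no reason to preserve meets; correspondingly, the left-hand side above is the meet of $\{\hQ\in\PV\mid\exists\, s>r:\hEA_s\leq\hQ\}$, whereas the right-hand side is the meet of the in general strictly larger set $\{\hQ\in\PV\mid\hEA_r\leq\hQ\}$, and right-continuity of $\EA$ only yields the inclusion of the former in the latter. Concretely, if $\hEA_s\gneq\hEA_r$ for all $s>r$ (e.g.\ continuous spectrum near $r$, with $\hEA_r\neq\hat 0,\hat 1$) and one takes the context $V=\{\hEA_r,\hat 1\}''$, then $\doto{\hEA_s}{V}=\hat 1$ for every $s>r$ while $\doto{\hEA_r}{V}=\hEA_r$, so $\bmeet_{s>r}\doto{\hEA_s}{V}=\hat 1\gneq\doto{\hEA_r}{V}$ and the componentwise right-limit jumps at this context. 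So the step you flag as ``where the bulk of the work lies'' is not actually justified by the paper either; completing your proposal would require either a genuinely new argument for why these particular meets commute with $\doto{\cdot}{V}$, or a restriction of the statement (e.g.\ to spectral families that are right-constant near $r$, or to a suitable subclass of contexts) under which the exchange is valid.
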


\begin{proof}
We already saw that the bottom element $-\infty=\bmeet_{r\in\eR} r$ is mapped to the bottom element $\ps\emptyset=\bmeet_{\ps S\in\Subcl\Sig} \ps S$, and that the top element $\infty=\bmeet\emptyset$ is mapped to the top element $\Sig=\bmeet\emptyset$.

Let $r\in\bbR$. For each $V\in\VN$, we have
\begin{align}
			(\bmeet_{s>r} \pde{\hEA_s})_V &= \bmeet_{s>r} \doto{\hEA_s}{V}\\
			&= \bmeet_{s>r} \bmeet\{\hQ\in\PV \mid \hEA_s\leq\hQ\}\\
			&= \bmeet\{\hQ\in\PV \mid \bmeet_{s>r}\hEA_s\leq\hQ\}\\
			&= \bmeet\{\hQ\in\PV \mid \hEA_r\leq\hQ\}\\
			&= \doto{\hEA_r}{V} = \pde{\hEA_r}_V,
\end{align}
so
\begin{equation}
			\bmeet_{s>r} \EbrA(s) = \bmeet_{s>r} \pde{\hEA_s} = \pde{\bmeet_{s>r} \hEA_s} =\pde{\hEA_r} = \EbrA(r).
\end{equation}
\end{proof}
We remark that the daseinisation map $\ps\de:\PN\ra\Subcl\Sig$ does \emph{not} preserve meets in general, but it preserves the particular meets needed here.

We have shown that $\EbrA:\eR\ra\Subcl\Sig$ is the cumulative distribution function (CDF) of the inverse image $\breve A^{-1}$ of the random variable $A$. In the notation of subsection \ref{Subsec_LValuedMeasures}, $\breve A^{-1}:\cB(\eR)\ra\Subcl\Sig$ is an $L$-valued measure, and $\EbrA$ is its $L$-CDF, where $L$ is the complete meet-semilattice $\Subcl\Sig$.

It follows easily from Prop. \ref{Prop_EbrARightCont} that $\EbrA$ is a meet-preserving map between complete lattices, so it has a left adjoint $\obrA$, given by
\begin{align}
			\obrA:\Subcl\Sig &\lra \eR\\			\nonumber
			\ps S &\lmt \inf\{r\in\eR \mid \ps S\leq\EbrA(r)\}.
\end{align}
This function is the $\Subcl\Sig$-quantile function of the $\Subcl\Sig$-CDF $\EbrA$. If we consider a clopen subobject $\ps S\in\Subcl\Sig$ of the form $\ps S=\pde{\hP}$, then
\begin{align}
			\obrA(\pde{\hP}) &= \inf\{r\in\eR \mid \pde{\hP}\leq\EbrA(r)\}\\
			&= \inf\{r\in\eR \mid \pde{\hP}\leq\pde{\hEA_r}\}\\
			&= \inf\{r\in\eR \mid \hP\leq\hEA_r\}\\
			&= \oA(\hP),
\end{align}
so the function $\obrA:\Subcl\Sig\ra\eR$ generalises the $q$-observable function $\oA:\PN\ra\eR$, which is the $\PN$-quantile function of the $\PN$-valued measure $\eA$.

Just as $\breve A^{-1}$, the inverse image of the random variable, is constructed in two steps, the CDF $\EbrA$ is as well: first, we have the map $\EA:\eR\ra\PN$ (the extended spectral family of $\hA$), then, the outer daseinisation of projections $\ps\de:\PN\ra\Sig$, so
\begin{equation}
			\EbrA=\ps\de\circ\EA:\eR \lra \Subcl\Sig.
\end{equation}

\subsection{States as probability measures on the spectral presheaf}			\label{Subsec_StatesAsProbabMeasures}
We briefly discuss the representation of quantum states as probability measures on the spectral presheaf, introduced in \cite{Doe09a} and further developed in \cite{DoeIsh12}. For some related work, see \cite{HLS11}.

Let $\rho$ be a state of the von Neumann algebra $\cN$ of physical quantities. Let $\cA(\VN,[0,1])$ denote the set of antitone (i.e., order-reversing) functions from the context category $\VN$ to the unit interval.

\begin{definition}
The \emph{probability measure on the spectral presheaf $\Sig$ associated with $\rho$} is the map
\begin{align}
			\mu_\rho:\Subcl\Sig &\lra \cA(\VN,[0,1])\\			\nonumber
			\ps S=(\ps S_V)_{V\in\VN} &\lmt (\rho(\alpha_V^{-1}(\ps S_V)))_{V\in\VN}.
\end{align}
\end{definition}
Consider two contexts such that $V'\subset V$. The fact that $\ps S$ is a (clopen) subobject is equivalent to the fact that $\alpha_{V'}^{-1}(\ps S_V)\geq\alpha_V^{-1}(\ps S_V)$. Hence,
\begin{equation}
			(\mu_\rho(\ps S))_{V'}=\rho(\alpha_{V'}^{-1}(\ps S_V))\geq\rho(\alpha_V^{-1}(\ps S_V))=(\mu_\rho(\ps S))_V,
\end{equation}
so $\mu_\rho$ indeed is an antitone function. Note that we assign a probability $(\mu_\rho(\ps S))_V\in [0,1]$ to every context $V\in\VN$ such that smaller contexts are assigned larger probabilities (which is a consequence of coarse-graining).

It is straightforward to show that for all states $\rho$ of $\cN$,
\begin{itemize}
	\item [(1.)] $\mu_\rho(\ps\emptyset)=0_{\VN}$, where $0_{\VN}$ is the function that is constantly $0$ on $\VN$, and analogously, $\mu_\rho(\Sig)=1_{\VN}$,
	\item [(2.)] for all $\ps S_1,\ps S_2\in\Subcl\Sig$,
	\begin{equation}
				\mu_\rho(\ps S_1)+\mu_\rho(\ps S_2)=\mu_\rho(\ps S_1)\join\mu_\rho(\ps S_2)+\mu_\rho(\ps S_1)\meet\mu_\rho(\ps S_2),
	\end{equation}
	where addition (and meets and joins) are taken contextwise.
\end{itemize}
Any map
\begin{align}
			\mu:\Subcl\Sig &\lra \cA(\VN,[0,1])\\			\nonumber
			\ps S=(\ps S_V)_{V\in\VN} &\lmt (\mu_V(\ps S_V))_{V\in\VN}
\end{align}
such that (1.) and (2.) hold is called a \emph{probability measure on $\Sig$}. Note that $\mu$ has components $\mu_V$, $V\in\VN$, where $\mu_V:(\Subcl\Sig)_V\simeq\Cl(\Sig_V)\ra [0,1]$. If $\mu_1,\mu_2$ are probability measures on $\Sig$ and $c\in [0,1]$, then we define the convex combination $c\mu_1+(1-c)\mu_2$ by
\begin{equation}
			\forall V\in\VN: (c\mu_1+(1-c)\mu_2)_V = c\mu_{1;V}+(1-c)\mu_{2;V}.
\end{equation}

It was shown in \cite{Doe09a} that if $\cN$ has no type $I_2$-summand, every probability measure $\mu$ on $\Sig$ determines a unique quantum state $\rho_\mu$ of $\cN$ such that $\mu_{\rho_\mu}=\mu$ and $\rho_{\mu_\rho}=\rho$.

\begin{theorem}
Let $\cN$ be a von Neumann without a summand of type $I_2$. There is a bijection between the convex set $\mc S(\cN)$ of states of $\cN$ and the convex set $\mc M(\Sig)$ of probability measures on its spectral presheaf.
\end{theorem}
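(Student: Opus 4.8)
The plan is to establish the bijection by exhibiting two mutually inverse, convexity-preserving maps between $\mc S(\cN)$ and $\mc M(\Sig)$. The forward map sends a state $\rho$ to the probability measure $\mu_\rho$ defined in the previous subsection; we have already verified that $\mu_\rho$ satisfies conditions (1.) and (2.), so it genuinely lands in $\mc M(\Sig)$. The backward map sends a probability measure $\mu$ to a state $\rho_\mu$. The crux of the argument is that the whole statement is essentially a citation of the result of \cite{Doe09a}, so I would present the proof as an assembly of that result together with a verification of the convex-structure claim, which is the genuinely new content here.

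First I would recall precisely what \cite{Doe09a} provides under the hypothesis that $\cN$ has no type $I_2$ summand: namely, that $\rho\mt\mu_\rho$ is a well-defined map $\mc S(\cN)\ra\mc M(\Sig)$ admitting an inverse $\mu\mt\rho_{\mu}$, with the two identities $\mu_{\rho_\mu}=\mu$ and $\rho_{\mu_\rho}=\rho$ holding. These two identities are exactly the statement that the maps are mutually inverse, and hence that we have a bijection of underlying sets. The role of the type $I_2$ hypothesis is that it is precisely the condition under which the generalised Gleason theorem \cite{Mae90} applies, allowing one to reconstruct a genuine (finitely additive) state from a measure on projections; the absence of a type $I_2$ summand rules out the pathological two-dimensional factors where additive measures on projections fail to extend to states.

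Second I would verify that this bijection respects the convex structures, which is what upgrades it from a bijection of sets to a bijection of convex sets. In one direction, I would check that $\rho\mt\mu_\rho$ is affine: for states $\rho_1,\rho_2$ and $c\in[0,1]$, the state $c\rho_1+(1-c)\rho_2$ is mapped to $c\mu_{\rho_1}+(1-c)\mu_{\rho_2}$. This follows contextwise directly from the definition of $\mu_\rho$, since at each $V$ one has
\begin{equation}
			(\mu_{c\rho_1+(1-c)\rho_2}(\ps S))_V=(c\rho_1+(1-c)\rho_2)(\alpha_V^{-1}(\ps S_V))=c(\mu_{\rho_1}(\ps S))_V+(1-c)(\mu_{\rho_2}(\ps S))_V,
\end{equation}
using only linearity of states in the argument and the contextwise definition of convex combinations on $\mc M(\Sig)$ given above. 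Since a bijection that is affine in one direction is automatically affine in the other, this settles that the inverse is affine too, and hence that we have an isomorphism of convex sets.

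The main obstacle is not in the convexity bookkeeping, which is routine, but in the reconstruction map $\mu\mt\rho_\mu$ and the verification of $\rho_{\mu_\rho}=\rho$; however, since this is precisely the content of \cite{Doe09a}, I would treat it as established and cite it rather than reprove it. The one point requiring genuine care is to confirm that the convex combination of probability measures defined contextwise does indeed yield another element of $\mc M(\Sig)$ — that is, that conditions (1.) and (2.) are stable under convex combination — so that the phrase ``convex set $\mc M(\Sig)$'' is justified before one even speaks of the bijection. Checking (1.) is immediate from linearity, and (2.) follows because the modular-type identity relating joins, meets and sums is preserved under the contextwise affine combination; I would note this briefly and then conclude that the affine bijection furnished by \cite{Doe09a} is the required isomorphism of convex sets.
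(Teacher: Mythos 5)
Your proposal matches the paper's treatment: the theorem is presented there as a direct consequence of the result cited from \cite{Doe09a} (which supplies the mutually inverse maps $\rho\mt\mu_\rho$ and $\mu\mt\rho_\mu$ with $\mu_{\rho_\mu}=\mu$ and $\rho_{\mu_\rho}=\rho$), and no further proof is given. Your additional contextwise verification that $\rho\mt\mu_\rho$ is affine and that $\mc M(\Sig)$ is closed under convex combinations is correct and only makes explicit what the paper leaves implicit.
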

A probability measure $\mu\in\mc M(\Sig)$ corresponds to a normal state $\rho_\mu$ if and only if it preserves joins of increasing nets of clopen subobjects, see \cite{DoeIsh12}.

The codomain of a probability measure $\mu_\rho$ is the set $\cA(\VN,[0,1])$ of antitone functions from $\VN$ to $[0,1]$. This set is in fact isomorphic to the set of global sections of a certain presheaf $\ps{\bbR^\succeq}$ (see \cite{Doe09a}). Seen topos-internally, this presheaf is the presheaf of upper reals in $\SetVNop$. Hence, a probability measure $\mu_\rho$ takes its values in the global sections of a suitable topos-internal space of `values'. Intuitively speaking, there is one probability for each context. From the usual topos-external perspective, we expect to obtain just a single probability. In the next subsection, we show how to do this explicitly.

\subsection{State-proposition pairing and the Born rule} 
Let ``$\Ain\De$'' be a proposition, which we interpret as ``upon measurement, the outcome of the random variable $A$ will be found to lie in $\De$''. In subsection \ref{SubSec_QRandomVars}, we had associated a clopen subobject of $\Sig$ with ``$\Ain\De$'', namely $\breve A^{-1}(\De)=\ps S(\hA;\De)=\pde{\eA(\De)}$. 

Given a quantum state $\rho$, we defined the associated probability measure $\mu_\rho:\Subcl\Sig\ra\mc A(\VN,[0,1])$. Hence, we can define a pairing between propositions and states by
\begin{equation}			\label{Def_StatePropPairing}
			\text{(``}\Ain\De\text{''},\rho) \lmt \mu_\rho(\breve A^{-1}(\De)).
\end{equation}
This is exactly analogous to the classical case: there, a physical quantity $A$ is represented by a measurable function $f_A:\Sigma\ra\bbR$ from the state space $\Sigma$ of the system to the real numbers, and a proposition ``$\Ain\De$'' is represented by the measurable subset $f_A^{-1}(\De)\subseteq\Sigma$ of the state space. Moreover, a classical state is a probability measure $\mu_c:\cB(\Omega)\ra [0,1]$, where $\cB(\Omega)$ denotes the measurable subsets of $\Omega$. The state-proposition pairing in the classical case is given by
\begin{equation}
			\text{(``}\Ain\De\text{''},\rho) \lmt \mu_c(f_A^{-1}(\De)).
\end{equation}
Yet, in the quantum case we have a function $\mu_\rho(\breve A^{-1}(\De)):\VN\ra [0,1]$, assigning one probability to each context $V\in\VN$, while classically, we have just one probability. 

Moreover, the usual Born rule of quantum theory gives only one probability: the Born rule says that in the state $\rho$, the probability that upon measurement of $A$ the outcome will lie in $\De$ is given by
\begin{equation}
			\on{Prob}(\text{``}\Ain\De\text{''};\rho)=\rho(\eA(\De)).
\end{equation}
As mentioned before, measuring of $A$ implicitly means choosing a context that contains $A$. Let $V$ be a context that contains $\hA$, the self-adjoint operator representing $A$. As we saw in subsection \ref{SubSec_QRandomVars}, in such a context $V$ the component $(\breve A^{-1}(\De))_V=\ps S(\hA;\De)$ is given by $\eA(\De)$ (or the corresponding clopen subset $S_{\eA(\De)}\subseteq\Sig_V$). Hence, we obtain
\begin{equation}
			(\mu_\rho(\breve A^{-1}(\De)))_V=(\mu_\rho)_V(S_{\eA(\De)})=\rho(\eA(\De))
\end{equation}
if $\hA\in V$. In this way, we reproduce the Born rule, the predictive content of quantum theory.

Moreover, if $\hA\notin V$, then the component $(\breve A^{-1}(\De))_V=\ps S(\hA;\De)$ is given by $\doto{\eA(\De)}{V}$, which in general is a larger projection than $\eA(\De)$. Hence, in such a context we obtain
\begin{equation}
			(\mu_\rho(\breve A^{-1}(\De)))_V=(\mu_\rho)_V(S_{\doto{\eA(\De)}{V}})=\rho(\doto{\eA(\De)}{V}),
\end{equation}
which is larger than or equal to the probability determined by the Born rule. (This is another consequence of coarse-graining.) We have shown:

\begin{proposition}
The minimum of the function $\mu_\rho(\breve A^{-1}(\De)):\VN\ra [0,1]$ resulting from our state-proposition pairing \eq{Def_StatePropPairing} is the usual quantum expectation value $\rho(\eA(\De))$ provided by the Born rule.
\end{proposition}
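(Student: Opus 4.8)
The plan is to establish two inequalities that pin the minimum of $\mu_\rho(\breve A^{-1}(\De)):\VN\ra[0,1]$ exactly at the Born-rule value. Write $\hP:=\eA(\De)$ for the spectral projection, so that, as computed immediately before the statement, $(\mu_\rho(\breve A^{-1}(\De)))_V=\rho(\doto{\hP}{V})$ for every context $V\in\VN$. The task is then to show that the smallest value of $V\mapsto\rho(\doto{\hP}{V})$ is $\rho(\hP)$.

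First I would show that $\rho(\eA(\De))$ is a lower bound for this function. For every $V$, the outer daseinisation $\doto{\hP}{V}=\bmeet\{\hQ\in\PV\mid\hP\leq\hQ\}$ is the smallest projection of $V$ dominating $\hP$; since the inclusion $\PV\hookrightarrow\PN$ preserves all meets, this meet agrees with the one formed in $\PN$, and as $\hP$ is a lower bound of the set $\{\hQ\in\PV\mid\hP\leq\hQ\}$ we obtain $\hP\leq\doto{\hP}{V}$. Because the state $\rho$ is positive, it is monotone on projections (if $\hP\leq\hQ$ then $\hQ-\hP\geq 0$, whence $\rho(\hQ)-\rho(\hP)=\rho(\hQ-\hP)\geq 0$), so $\rho(\doto{\hP}{V})\geq\rho(\hP)$ for every $V$, i.e. $(\mu_\rho(\breve A^{-1}(\De)))_V\geq\rho(\eA(\De))$.

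Next I would exhibit a context at which the bound is attained. If $\hA$ is not a scalar multiple of the identity, take $V_0:=\VlA=\{\hA,\hat 1\}''$; this is a non-trivial abelian subalgebra sharing the unit, so $V_0\in\VN$, and it contains every spectral projection of $\hA$. (If instead $\hA=c\hat 1$, then $\eA(\De)\in\{\hat 0,\hat 1\}$ lies in every context, so any $V_0\in\VN$ serves.) For such a $V_0$ we have $\hP=\eA(\De)\in\mc P(V_0)$, and therefore the smallest projection of $V_0$ dominating $\hP$ is $\hP$ itself, i.e. $\doto{\hP}{V_0}=\hP$. Hence $(\mu_\rho(\breve A^{-1}(\De)))_{V_0}=\rho(\hP)=\rho(\eA(\De))$. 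Combined with the lower bound, the infimum over $\VN$ equals $\rho(\eA(\De))$ and is attained at $V_0$, so it is a genuine minimum.

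There is essentially no serious obstacle here: the result follows from two standard features of outer daseinisation, namely that it approximates from above ($\doto{\hP}{V}\geq\hP$) and that it acts as the identity on projections already lying in $V$. The only point needing a moment's care is confirming that a context containing $\hA$ always exists in $\VN$, which the degenerate scalar case handles separately as noted; everything else is monotonicity of the state.
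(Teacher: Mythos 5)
Your proof is correct and follows essentially the same route as the paper: the value at any context $V$ containing $\hA$ equals $\rho(\eA(\De))$ since $\doto{\eA(\De)}{V}=\eA(\De)$ there, while at all other contexts outer daseinisation gives a larger projection and hence, by positivity of $\rho$, a value $\geq\rho(\eA(\De))$. Your explicit treatment of the degenerate case $\hA=c\hat 1$ (where $\VlA$ would be trivial and hence not in $\VN$) is a small point of care the paper glosses over, but it does not change the argument.
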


The quantum-theoretical CDF of a quantum random variable with respect to a state $\rho$ is obtained as the composition
\begin{equation}
			C^{\breve A}:=\min_{V\in\VN}\circ\;\mu_\rho\circ\EbrA,
\end{equation}
where $\mu_\rho:\Subcl\Sig\ra\cA(\VN,[0,1])$ is the probability measure on $\Sig$ corresponding to $\rho$. Note that $C^{\breve A}$ takes values in $[0,1]$, just as a CDF usually does. Concretely,
\begin{align}
			C^{\breve A}:\eR &\lra [0,1]\\			\nonumber
			r &\lmt \min_{V\in\VN} \mu_{\rho}(\EbrA(r)).
\end{align}
By definition, $\EbrA(r)=\breve A^{-1}([-\infty,r])=\pde{\hEA_r}$ (cf. \eq{Def_EbrA}). The minimum is attained at those contexts $V\in\VN$ that contain $\hA$, which confirms that $C^{\breve A}$ indeed is the CDF of the random variable described by $\hA$ (or its Gelfand transform $\ol A$).

Finally, the corresponding quantile function is the left adjoint of $C^{\breve A}$, that is,
\begin{align}
			q^{\breve A}:[0,1] &\lra \eR\\			\nonumber
			s &\lmt \inf\{r\in\eR \mid s\leq C^{\breve A}(r)\}.
\end{align}

We summarise our results in table 2, which should be compared to table 1 above. In table 2, we use the Gelfand spectrum $\Sigma_V$ of an abelian von Neumann algebra $V$ as the sample space on the classical side. Since the Gelfand transform $\ol A:\Sigma_V\ra\sp\hA\subset\eR$ of a self-adjoint operator $\hA\in V$ is the representative of a random variable in the classical case, the inverse image of the random variable is a map $\ol A^{-1}:\cB(\eR)\ra\Cl(\Sigma_V)$. On the quantum side, the spectral presheaf $\Sig$, which is a reasonably straightforward generalisation of the Gelfand spectrum to nonabelian von Neumann algebras, plays the role of the sample space.

The relevant complete meet-semilattices are $L=\Cl(\Sigma_V)$ (which is a complete Boolean algebra isomorphic to $\PV$, see \eq{Def_alphaV}) in the classical case and $L=\Subcl\Sig$ (which is a complete bi-Heyting algebra) in the quantum case.

\begin{table}[htdp]
\label{Table_CAndQProbab2}
\begin{center}
\begin{tabular}{| c | c | c |} \hline
 & & \\
 & Classical & Quantum \\
 & & \\
\hline & & \\
Sample space & $\Sigma_V$ & $\Sig$\\
 & & \\
\hline & & \\
Inv. im. of random var. & $\ol A^{-1}:\cB(\eR)\ra\Cl(\Sigma_V)$ & $\breve A^{-1}:\cB(\eR) \ra \Subcl\Sig$\\
 & $\De \mt \ol A^{-1}(\De)$ & $\De \mt \pde{\eA(\De)}=\ps S(\hA;\De)$\\
 & & \\
\hline & & \\
$L$-CDF & $\tCA:\eR \ra \Cl(\Sigma_V)$ & $\EbrA:\eR\ra\Subcl\Sig$\\
 & $r \mt \tCA(r)=\ol A^{-1}([-\infty,r])$ & $r \mt \EbrA_r=\breve A^{-1}([-\infty,r])$\\
 & & \\
\hline & & \\
$L$-quantile function & $\tqA:\Cl(\Sigma_V)\ra\eR$ & $\tilde q^{\breve A}=\obrA:\Subcl\Sig\ra\eR$\\
 & $S \mt \inf\{r\in\eR \mid S\leq\tCA(S)\}$ & $\ps S \mt \inf\{r\in\eR \mid \ps S\leq\EbrA(r)\}$\\
 & & \\
\hline & & \\
State & $\mu:\Cl(\Sigma_V)\ra [0,1]$ & $\mu_\rho:\Subcl\Sig\ra \cA(\VN,[0,1])$\\
 & & \\
\hline & & 
\\CDF & $C^A=\mu\circ\tCA:\eR\ra [0,1]$ & $C^{\breve A}:\eR\ra [0,1]$\\
 & $r \mt \mu(A^{-1}([-\infty,r]))$ & $r \mt \min_{V\in\VN}\mu_\rho(\breve A^{-1}([-\infty,r]))$ \\
 & & \\
\hline & & \\
Quantile function & $q^A:[0,1]\ra\eR$ & $q^{\breve A}:[0,1]\ra\eR$\\
 & $s \mt \inf\{r\in\eR \mid s\leq C^A(r)\}$ & $s \mt \inf\{r\in\eR \mid s\leq C^{\breve A}(r)\}$ \\
 & & \\
\hline
\end{tabular}
\end{center}
\caption{Classical probability and quantum probability in the topos-based perspective}
\end{table}

\section{Summary and outlook}			\label{Sec_Summary}
\textbf{$L$-valued measures, CDFs, and quantile functions.} We observed in section \ref{Sec_qObsFctsAndProbab} that even in the classical case of a random variable $A$ and a probability measure $\mu$ on a measurable space $\Omega$, there is a Galois connection between the cumulative distribution function $C^A$ and the quantile function $q^A$. This becomes obvious from the adjoint functor theorem for posets when the domain of $C^A$ (and hence the codomain of $q^A$) is extended canonically to the extended reals $\eR$.

This observation was then generalised to measures taking values not in $[0,1]$, but in a complete meet-semilattice $L$. This includes 
\begin{itemize}
	\item the case $L=\cB(\Omega)$ of (equivalence classes of) measurable subsets (modulo null subsets) of a classical sample space $\Omega$, where the $L$-valued measure is the inverse image map $A^{-1}$ of a random variable $A$, 
	\item the usual quantum-mechanical case of $L=\PN$, that is, of projection-valued measures, where the $L$-valued measure of a random variable $A$, represented by a self-adjoint operator $\hA$, is the spectral measure $\eA$,
	\item the presheaf- and topos-based case of $L=\Subcl\Sig$ of clopen subobjects of the spectral presheaf $\Sig$, which plays the role of a joint sample space for all quantum random variables, as shown in section \ref{Sec_QProbabAndSigma} (and summarised below).
\end{itemize}
It is not important if $L$ is a distributive lattice (such as $\cB(\Omega)$ in the classical case), or a non-distributive one (such as $\PN$ in the quantum case), it is just required that $L$ be a complete meet-semilattice.\footnote{ We recall that each complete meet-semilattice also is a complete join-semilattice.} $L$-CDFs can be defined for $L$-valued measures in the obvious way, see \eq{Def_L-CDF}. Since $L$-CDFs are meet-preserving maps, they have left adjoints, which we call $L$-quantile functions. An ordinary CDF can be decomposed into an $L$-CDF, followed by the measure, see \eq{Eq_CDFDecomposed}. Analogously, each ordinary quantile function can be decomposed into the adjoint of a measure, followed by the $L$-quantile function, see \eq{Eq_QuantFctDecomposed}. This presupposes that the measure preserves meets, which is a regularity condition that for example holds for normal quantum states.

We then showed that $q$-observable functions, introduced in Part I \cite{DoeDew12a}, are $\PN$-quantile functions for $\PN$-valued measures -- that is, spectral measures -- defined by quantum observables seen as quantum random variables. This gives a clear interpretation of $q$-observable functions in terms of quantum probability theory. In this perspective, the (extended) spectral family $\EA$ plays the role of the $\PN$-CDF.

\textbf{Quantum probability in a topos-based perspective.} We have shown that the spectral presheaf $\Sig$ of the von Neumann algebra $\cN$ of observables of a quantum system can serve as a joint sample space for all quantum observables of the system. The Kochen-Specker theorem is a no-go theorem that shows that no measurable space in the usual, \emph{set}-based sense can play the role of a joint sample space for non-compatible quantum observables (represented by non-commuting self-adjoint operators). In the topos approach to quantum theory, the Kochen-Specker theorem is circumvented by the fact that $\Sig$ is not simply a set but a presheaf. Moreover, the spectral presheaf is a natural generalisation of the Gelfand spectrum $\Sigma_V$ of an abelian von Neumann algebra $V$. As is well-known, $\Sigma_V$ can serve as a sample space in the classical case, that is, for co-measurable physical quantities/random variables, which can be represented by commuting self-adjoint operators (or their Gelfand transforms).

In the classical case, the clopen subsets of $\Sigma_V$ form a complete Boolean algebra $\Cl(\Sigma_V)$ and serve as measurable subsets and hence as the codomain of inverse images of random variables, $A^{-1}:\cB(\im A)\ra\Cl(\Sigma_V)$. In the presheaf-based formulation of quantum probability, the clopen subobjects of $\Sig$ form a complete bi-Heyting algebra $\Subcl\Sig$ that generalises the complete Boolean algebra $\Cl(\Sigma_V)$ of clopen subsets of the Gelfand spectrum. 

Conceptually, a bi-Heyting algebra can be seen as a mild generalisation of a Boolean algebra: a bi-Heyting algebra is a distributive lattice (in contrast to the usual non-distributive lattice $\PN$ of projections used in standard quantum probability). The only difference between a Boolean algebra and a bi-Heyting algebra lies in the way complements are defined: a bi-Heyting algebra has two kinds of negations or pseudo-complements, one coming from the Heyting structure, the other from the co-Heyting structure. In a Boolean algebra, which also is a bi-Heyting algebra, the two negations coincide. (For a logical interpretation of the bi-Heyting structure of $\Subcl\Sig$, see \cite{Doe12}.)

Inverse images of quantum random variables are represented by mappings $\breve A^{-1}:\cB(\eR)\ra\Subcl\Sig$ from standard, topos-external Borel subsets of outcomes to clopen subobjects of the topos-internal sample space $\Sig$. The clopen subobject-valued map $\breve A^{-1}$ generalises the usual projection-valued spectral measure $\eA:\cB(\eR)\ra\PN$. The associated cumulative distribution function $\EbrA:\eR\ra\Subcl\Sig$ and its left adjoint $\obrA:\Subcl\Sig\ra\eR$ are the analogues of the spectral family $\EA:\eR\ra\PN$ and the quantum quantile function $\oA:\PN\ra\eR$. States of the quantum system are given by probability measures $\mu_\rho:\Subcl\Sig\ra\mc A(\VN,[0,1])$.

Crucially, the Born rule is captured by the new formalism, and quantum mechanical probabilities are calculated in a way completely analogous to the classical case: propositions ``$\Ain\De$'' are mapped to measurable subobjects $\ps S(\hA;\De)$ of the quantum sample space $\Sig$, and each quantum state, given by a probability measure $\mu_\rho$ on $\Sig$, assigns an element	$\mu_\rho(\ps S(\hA;\De))$ of $\mc A(\VN,[0,1])$ to each $\ps S\in\Subcl\Sig$. The minimum of the function $\mu_\rho(\ps S(\hA;\De))$ is the usual quantum mechanical probability that upon measurement of the random variable (physical quantity) $A$, the outcome will lie in $\De$:
\begin{equation}
			\on{Prob}(\text{``}\Ain\De\text{''};\rho)=\min_{V\in\VN}(\mu_\rho(\ps S(\hA;\De))).
\end{equation}
The minimum is attained at all contexts $V$ that contain the self-adjoint operator $\hA$.

Summing up, we have derived a formulation of quantum probability that is \emph{structurally} very similar to classical probability, more so than the usual, Hilbert space-based formulation.

\textbf{Outlook.} As is well-known, the probabilistic aspects of quantum theory can naturally be formulated and generalised using positive operator-valued measures (POVMs, also called semispectral measures) \cite{Hol82,NieChu00}. In future work, it will be interesting to consider generalisations of our constructions to POVMs, and to relate our constructions to more advanced aspects of noncommutative probability and their applications in the Fock space formalism of quantum theory \cite{Par92}.


\textbf{Acknowledgements:} We thank Prakash Panangaden and Chris Isham for discussions. We also thank Masanao Ozawa, Izumi Ojima, Mikl\'os R\'edei, Pekka Lahti, Massoud Amini and John Harding for their kind interest and for further suggestions. We hope to be able to develop some of these suggestions in future work.


\end{document}